\newtheorem{dfn}{Definition} 
\newtheorem{lemma}{Lemma} 
\newtheorem{thm}{Theorem}
\newcommand{\tab}{\hspace*{3mm}}
\newcommand{\If}{{\bf if }}
\newcommand{\Then}{{\bf then }}
\newcommand{\Else}{{\bf else }}
\newcommand{\End}{{\bf end }}
\newcommand{\For}{{\bf for}}
\newcommand{\Goto}{{\bf goto}}
\newcommand{\Wait}{{\bf Wait}}
\newcommand{\Int}{{\bf int }}
\newcommand{\Uint}{{\bf uint }}
\begin{document}
\title{
Atomic Cross-Chain Swaps with Improved Space and Local Time Complexity
}
%
%
\author[1]{Soichiro Imoto\thanks{Contact author. email:s-imoto@ist.osaka-u.ac.jp} }
\author[1]{Yuichi Sudo}
\author[2]{Hirotsugu Kakugawa}
\author[1]{Toshimitsu Masuzawa}

\affil[1]{Osaka University, Japan}
\affil[2]{Ryukoku University, Japan}
\date{}
\maketitle              

\begin{abstract}
An effective atomic cross-chain swap protocol is introduced by Herlihy [Herlihy, 2018] as a distributed coordination protocol in order to exchange assets across multiple blockchains among multiple parties.
An atomic cross-chain swap protocol guarantees;
(1) if all parties conform to the protocol, then all assets are exchanged among parties,
(2)even if some parties or coalitions of parties deviate from the protocol, no party conforming to the protocol suffers a loss, and
(3) no coalition has an incentive to deviate from the protocol.
Herlihy [Herlihy, 2018] invented this protocol by using hashed timelock contracts. \par

A cross-chain swap is modeled as a directed graph $D=(V,A)$. Vertex set $V$ denotes a set of parties and arc set $A$ denotes a set of proposed asset transfers.
Herlihy's protocol uses the graph topology and signature information to set appropriate hashed timelock contracts.
The space complexity of the protocol (i.e., the total number of bits written in the blockchains in a swap) is $O(|A|^2)$.
The local time complexity of the protocol (i.e., the maximum execution time of a contract in a swap to transfer the corresponding asset) is $O(|V|\cdot|L|)$, where $L$ is a feedback vertex set computed by the protocol.

We propose a new atomic cross-chain swap protocol which uses only signature information and improves the space complexity to $O(|A|\cdot|V|)$ and the local time complexity to $O(|V|)$.

\end{abstract}

\section{Introduction}
\subsection{Motivation}
The seminal work \cite{2} by Nakamoto Satoshi in 2008 for developing bitcoins has attracted many researchers to the research of blockchains.
However, the blockchain has problems in privacy level, increased transaction time and scalability.
In order to overcome them, new cryptocurrencies with a wide variety of advantages are developed.
There are also blockchains that handle physical rights as well as virtual currency (e.g., ownership of cars, copyrights of songs, proof of circulation and so on) \cite{3}.
It is a great advantage of blockchains that it allows us to exchange them in the absence of any trusted third parties.\par

As trading on blockchains becomes popular, demands for trading across multiple blockchains increase \cite{3}.
As a specific example of exchanging assets across multiple blockchains among multiple parties,
consider the case that Alice wants to sell the copyrights of her songs for bitcoins.
Bob is willing to buy her copyrights with alt-coins.
Carol wants to exchange alt-coins for bitcoins.
An \emph{atomic cross-chain swap} protocol is a mechanism by which multiple parties exchange their assets managed by multiple blockchains.
It is common that some parties do not know and do not trust each other, thus the protocol must guarantee that no party conforming to the protocol suffers from a loss in their trading.
Specifically,  this protocol guarantees the following three conditions.
(1) if all parties conform to the protocol, then all assets are exchanged among parties,
(2) even if some parties or coalitions of parties deviate from the protocol, no party conforming to the protocol suffers a loss, and
(3) no coalition has an incentive to deviate from the protocol.
The more blockchain users have request to trade as blockchain technology develops in the future, the more important atomic cross-chain swaps are \cite{10,11,12,13,14,15,16,17}. \par

In many blockchains, assets are transferred from one party to another party by using \emph{smart contracts}.
A smart contract is a program that runs on a blockchain and has its correct execution enforced by the consensus protocol \cite{4,5}.
In this paper, \emph{hashed timelock contracts} (HTLCs) \cite{6,7} are used.
In HTLC, recipients of a transaction have to acknowledge payment by generating cryptographic proof within a certain timeframe.
Otherwise, the transaction does not take place.
For example, consider the case Alice wants to send an asset to Bob by using HTLCs in gratitude for taking money from Bob.
Alice first generates random secret data $s$, called a \emph{secret}, and produces \emph{hashlock} $h = H(s)$, where $H$ is a cryptographic hash function.
Next,  Alice publishes the contract with hashlock $h$.
After that, if Alice takes money form Bob, Alice reveals the secret $s$ to Bob.
When Bob sends the secret $s$ to the contract, the contract irrevocably transfers Alice's asset to Bob.
Alice also sets \emph{timelock} $t$ so that her escrowed asset can be returned if Bob does not give money to Alice within the time. \par
In this paper, we consider the case that more than two parties exchange their assets, as shown in Figure \ref{fig6}.

\begin{figure}[htbp]
	\begin{center}
		\includegraphics[width= 0.6 \linewidth,clip]{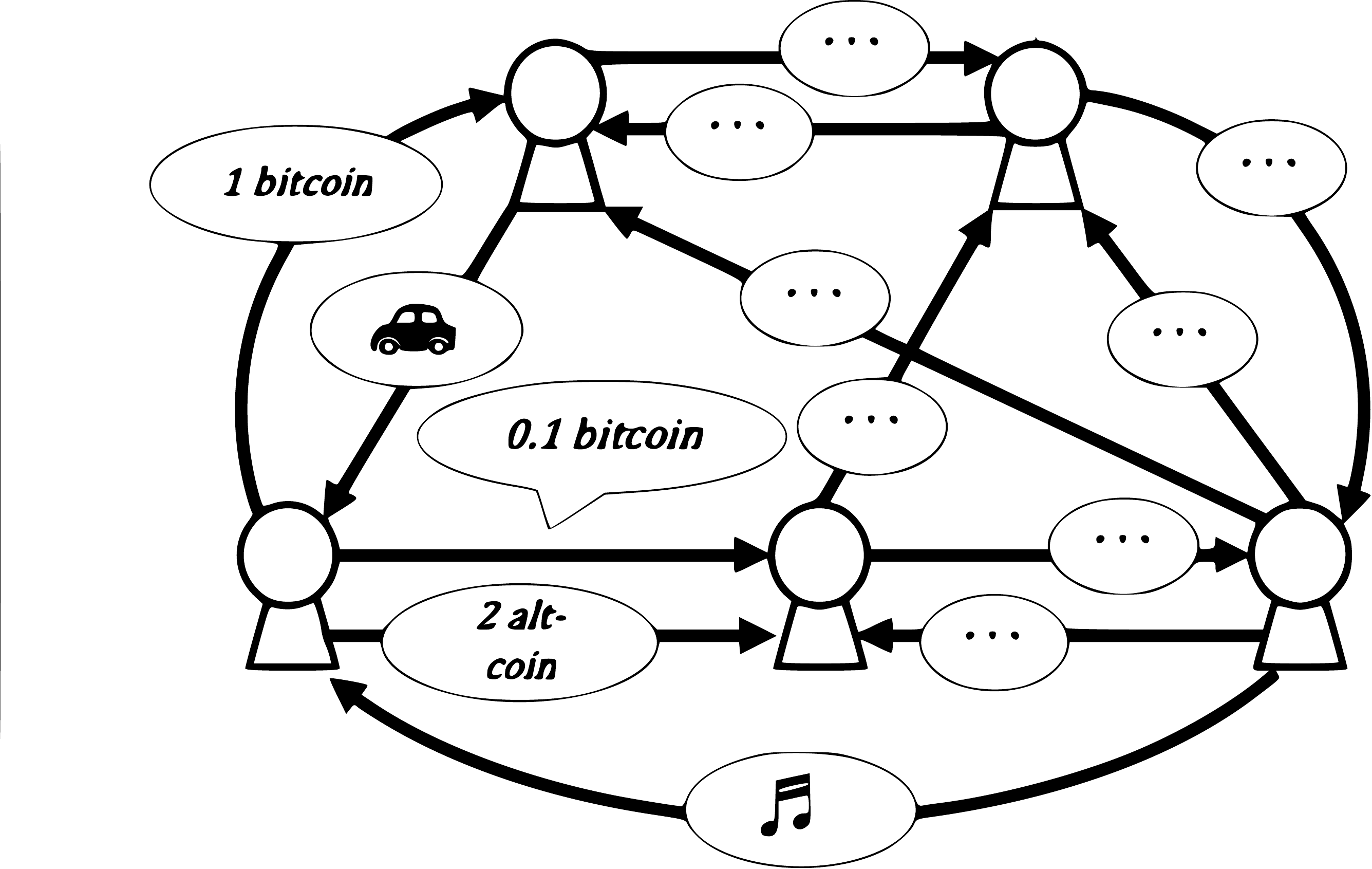}
		\caption{Several parties exchange their assets by using multiple blockchains. For example, one person get ownership of the car and a copyright of the song for 1.1 bitcoin and 2 alt-coin.}
		\label{fig6}
	\end{center}
\end{figure}

In the following, we quote a simple protocol presented by Herlihy \cite{1} for exchanging assets among Alice, Bob, and Carol, as illustrated in Figures \ref{fig:1} and \ref{fig:2}.

\begin{figure}[p]
  \begin{center}
    \includegraphics[clip,width=6.0cm]{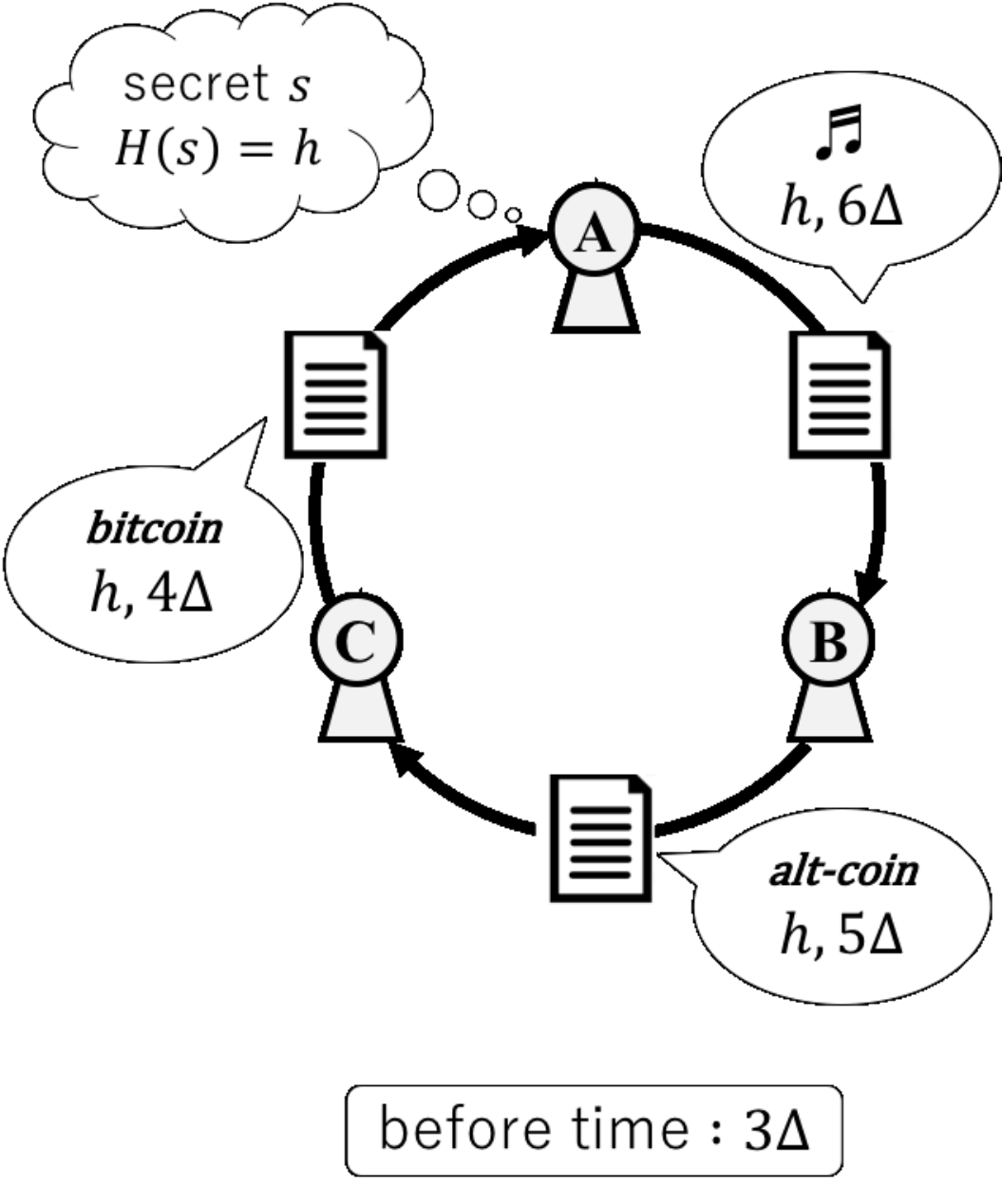}
    \caption{Alice, Bob and Carol publish smart contracts on blockchains}
    \label{fig:1}
  \end{center}
\end{figure}

\begin{figure}[htbp]
  \begin{center}
    \includegraphics[clip,width=15.0cm]{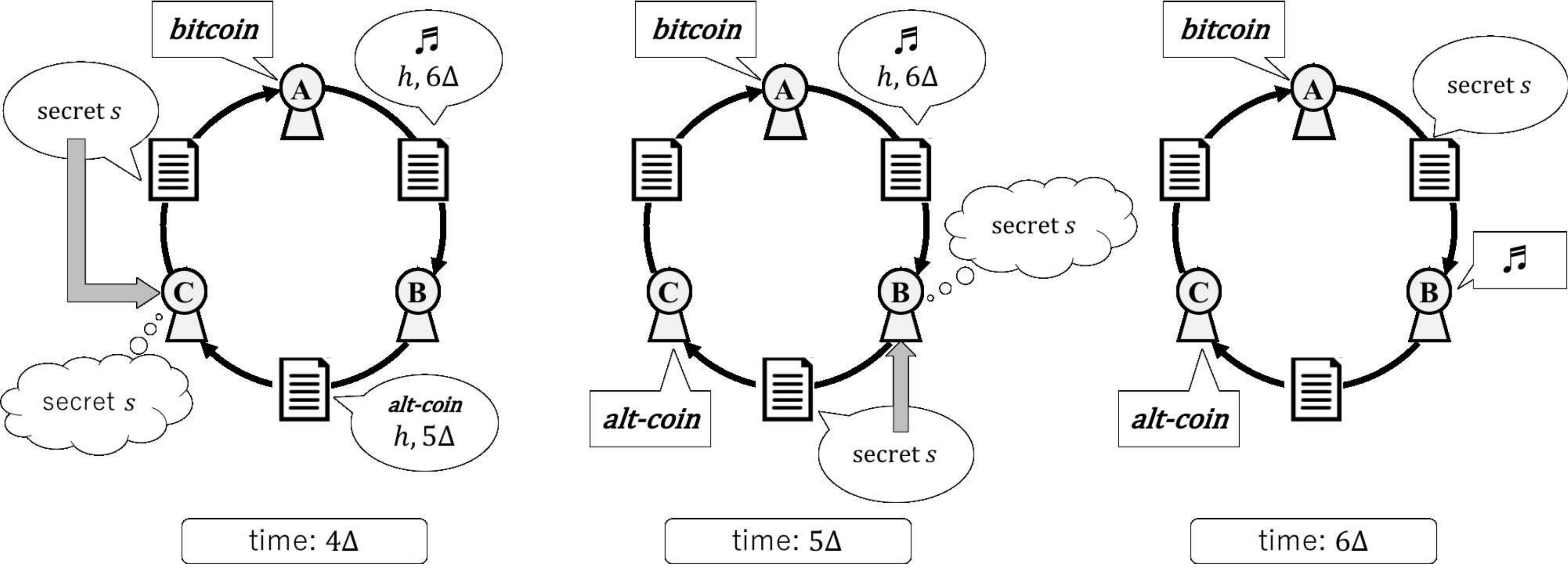}
    \caption{They acquire assets by sending secret $s$ to smart contracts in each time}
    \label{fig:2}
  \end{center}
\end{figure}

In that exchanging, Alice sends the copyrights of her songs to Bob, Bob sends alt-coins to Carol and Carol sends bitcoins to Alice.
Let $\Delta$ be time enough for one party to publish a smart contract on any of the blockchains,
or to change the state of a contract, and for the other party to detect the change.\par
\vspace{0.1in}
{\large $\cdot$ A simple protocol presented by Herlihy \cite{1}}
\begin{enumerate}
	\item Alice creates a secret $s$, hashlock $h = H(s)$, and publishes a contract on the music copyright blockchain with hashlock $h$ and timelock $6\Delta$ in the future, to transfer her music copyrights to Bob.
	\item When Bob confirms that Alice's contract has been published on the copyright blockchain, he publishes a contract on the alt-coin blockchain with the same hashlock $h$ but with timelock $5\Delta$ in the future, to transfer his alt-coins to Carol.
	\item When Carol confirms that Bob's contract has been published on the alt-coin blockchain, she publishes a contract on the Bitcoin blockchain with the same hashlock $h$, but with timelock $4\Delta$ in the future, to transfer her bitcoins to Alice.
	\item When Alice confirms that Carol's contract has been published on the Bitcoin blockchain, she sends the secret $s$ to Carol's contract, acquiring the bitcoins and revealing $s$ to Carol.
	\item Carol then sends $s$ to Bob's contract, acquiring the alt-coins and revealing s to Bob.
	\item Bob sends $s$ to Alice's contract, acquiring the copyrights and completing the swap.
\end{enumerate}

Everyone can stop the swap if published contracts are different from predetermined ones.
There is no possibility that Alice transfers to Bob the copyrights without acquiring the bitcoins because only Alice initially knows secret $s$.
If Carol's bitcoins have been transferred, this guarantees that she can get $s$ and acquire the alt-coins as well because she publishes her contract after confirming publication of Bob's contract and the timeout specified by the time lock of Bob's contract is one $\Delta$ greater than that of her contract.
Bob also acquires the copyrights if his alt-coins have transferred. 
Even if Alice and Bob conspire to deceive Carol, they can not get the bitcoins without payment for Carol, thus Carol never suffers from a loss.
As seen from these facts, every party should publish his contracts only after all the contracts for assets transferred to him are published unless he generates a secret for a swap.
We call a party who generates a secret for a swap a \emph{leader}, and denote the set of leaders by $L$.
From the above discussion, $L$ must be a feedback vertex set.
This simple protocol works if and only if we have exactly one leader $l$ such that  $\{l\}$ is a feedback set. \par

Generally, a cross-chain swap is modeled as a directed graph $D=(V,A)$. Vertex set $V$ denotes a set of parties and arc set $A$ denotes a set of proposed asset transfers.
When there are multiple cycles, this simple protocol does not work.
For example, there are three parties and all want to exchange each other in the swap, as illustrated in Figure \ref{fig:3}.

\begin{figure}[htbp]
  \begin{center}
    \includegraphics[clip,width=15.0cm]{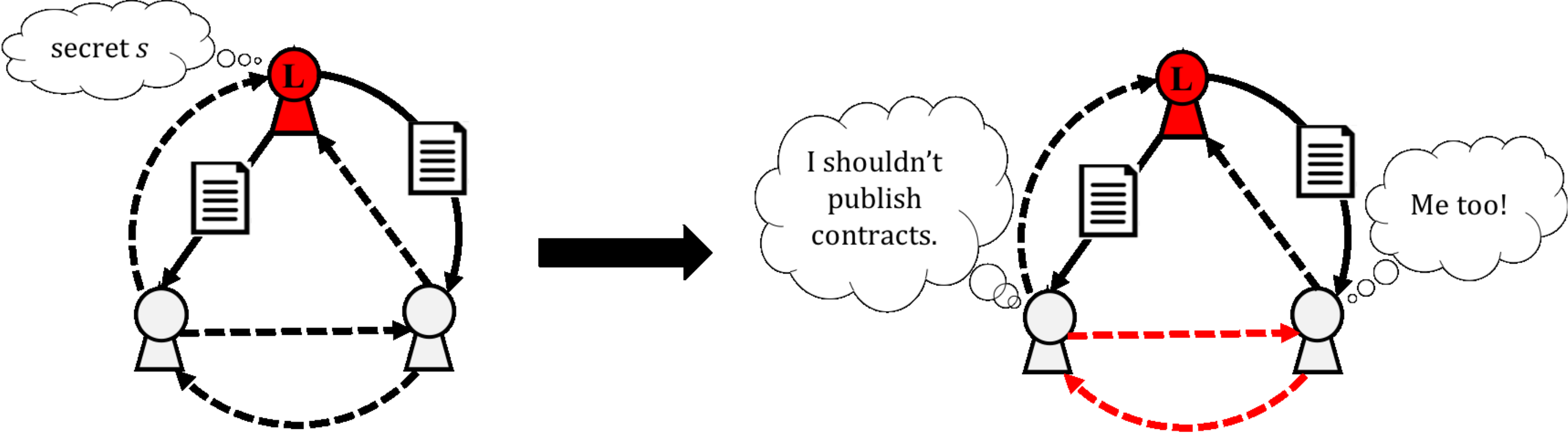}
    \caption{A case of only one leader in multiple cycles}
    \label{fig:3}
  \end{center}
\end{figure}

The leader can publish his two contracts without anxiety about a result of the swap. 
However, each of other two parties cannot publish his contracts until he confirms the other publishes a contract to him. 
Therefore, at least one leader is necessary on each cycle of a swap graph.
Moreover, if there are multiple cycles, it is not possible to assign consistent timeouts across cycles in a way that guarantees a gap of at least $\Delta$ between contracts for giving assets and those for taking assets.
That is, it is not easy to realize a mechanism such that each leader reveals its secret with confidence when there are multiple leaders.\par

To circumvent the problem, Herlihy \cite{1} sets complicated conditions for each smart contract to design an atomic cross-chain swap protocol. 
In the Herlihy's protocol, each leader creates a secret and one hashlock for each of the $|L|$ secrets is set in each contract in order to publish contracts safely.
Each leader sends his own secret with his signature to the contracts corresponding the asset which he wants to get.
The party who published the contract looks at it and gets the secret with the leader's signature.
Parties unlocks the hashlock created by the leader by sending all secrets with the signature and his own signature.
Each hashlock has its timelock, which depends on length of the path along which the secret is transferred from the leader generating the secret.
The sequence of signatures of parties indicates the path.
It is necessary to check whether the path actually exists in a given swap. 
Therefore, topology information of a swap graph is stored in every contract, which requires $O(|A|)$ bits.
A contract is triggered when all secrets of each leader are sent into its timelock to the contract and the network nodes of the blockchain of the corresponding contract verify that these paths exist.

\subsection{Our contributions}
Let $D(V,A)$ be a given directed graph that represents a cross-chain swap to general graphs.
We propose an atomic cross-chain swap protocol, which improves the space complexity (bits stored on all blockchains) of Herlihy's protocol \cite{1} from $O(|A|^2)$ to $O(|A|\cdot|V|)$, and the local time complexity (the maximum execution time of a contract in a swap to transfer the corresponding asset) from $O(|V|\cdot|L|)$ to $O(|V|)$.
The protocol dose not store the swap topology in a contract.
Instead, it simply assigns each contract the timelock depending on the number of signatures of parties.
Specifically, we set the timelock of a contract as follows: the contract is triggered if and only if the contract is given signatures of distinct $x$ parties in a swap before time $t+x\Delta$ for some $x \in [1,n]$ where $t$ is the time instance we will explain later (the starting time of the execution plus alpha).
Suppose that an asset of party $v$ is transferred to another party by executing a contract.
This means that the signature of $x$ parties are published in the corresponding blockchain before time $t+x\Delta$.
Then, party $v$ can trigger all the contracts that transfer an asset to $v$ because $v$ can obtain the signature of $x$ parties from that blockchain and can give those contracts $x+1$ signatures by adding its own signature before time $t+(x+1)\Delta$.
As a result, it is not necessary to store the topology information of a swap in any contract, thus the space complexity is improved.\par

Let \emph{local time complexity} be how long a contract takes to be triggered from executing a first action for triggering, which is sending a secret to the contract from a party at first.
The local time complexity is much smaller than the execution time of the protocol.
However, it is worthwhile to mention that each contract is executed by \emph{all} network nodes that host the copies of the ledger of the corresponding blockchain.
Therefore, the local time complexity is an index independent of the execution time and reducing local time complexity of a smart contract is of practical importance.
In Herlihy's protocol, a contract must verify at most $|V|$ signatures for the hashlock of each leader.
Thus, we need $\Theta(|V|\cdot |L|)$ time to trigger one contract.
On the other hand, in the proposed protocol, we only need to verify $|L|(\le |V|)$ secrets and at most $|V|$ signatures.
That is why the proposed protocol improves local time complexity from $|V|\cdot|L|$ to $|V|$.
We summarize our contribution to Table \ref{table:ref}.

\begin{table}[t]
\centering
\caption{Comparison between the performances of Herlihy's protocol and our protocol.}
\label{table:ref}
\begin{tabular}{|c | c | c | c |}
\hline
 & local time complexity & execution time & space complexity \\
\hline
Herlihy's protocol & $O(|V|\cdot|L|)$ & $O(|V|\Delta)$ & $O(|A|^2)$\\
Our protocol & $O(|V|)$ & $O(|V|\Delta)$ & $O(|A|\cdot|V|)$\\
\hline
\end{tabular}
\end{table}

\section{Model}\label{model}
\subsection{Graph Model}
$D=(V,A)$ is a \emph{directed graph}, where $V$ is a finite set of \emph{vertexes}, and $A$ is a finite set of ordered pairs of distinct vertexes called \emph{arcs}.
The number of vertexes $|V|$ is denoted by $n$.
An arc $(u,v)$ has tail $u$ and head $v$. 
An arc \emph{leaves} its tail and \emph{enters} its head.
Therefore, an arc $(u,v)$ is \emph{leaving arc} from $u$ and \emph{entering arc} to $v$.
A digraph $D'=(V',A')$ is a \emph{subdigraph} of $D=(V,A)$ if $V' \subseteq V$, $A' \subseteq A \cap (V' \times V')$ . \par

A \emph{path} $p$ in $D$ is a sequence of vertexes $(v_0,v_1,...,v_k)$ with $(v_i,v_{i+1}) \in A$ for every $i=0,1,\dots ,k-1$, where we define $k$ as the length of the path.
The \emph{distance} $dist(u,v)$ from vertex $u$ to vertex $v$ is the length of the shortest path from $u$ to $v$ in $D$.
$D$'s \emph{diameter} $diam(D)$ is the largest distance between two vertexes in $D$.
A vertex $v$ is \emph{reachable} from vertex $u$ if there is a path from $u$ to $v$.
Directed graph $D$ is \emph{strongly connected} if every vertex is reachable from any other vertex in $D$.
A path $(u_0,...,u_k)$ with $k \ge 1$ is a \emph{cycle} if $u_0 = u_k$.
A \emph{feedback vertex set} of $D$ is a set of vertexes whose removal makes a graph acyclic.
By definition, any feedback vertex set contains at least one vertex in any cycle in $D$.

\subsection{Blockchain and Smart Contract Model}\label{contract model}
A \emph{blockchain} is a distributed ledger that can record transactions between parties in a verifiable and permanent way.
We do not assume any specific blockchain algorithm.
We only assume that every blockchain used in a cross-chain swap supports \emph{smart contracts}.
An owner of an asset can transfer ownership of the asset to a \emph{counterparty} by creating a smart contract.
The owner specifies conditions in the contract to transfer the asset.
A counterparty can get ownership of the asset when the conditions are satisfied.
We say that a contract is \emph{published} when the owner of the corresponding asset releases the contract on a blockchain.
A contract is \emph{triggered} when all conditions of the contract are satisfied and the ownership of the asset on the contract is transferred to the counterparty.
The owner of an asset can also specify conditions to get back the asset on a smart contract.
If the conditions are satisfied before the contract is triggered, the asset is returned to the owner.
Thereafter, the asset is never transferred to the counterparty by this contract.
Typically, the owner can specify the condition to get back the asset in the form of \emph{timeout condition}.
The owner can regain the asset anytime after the specified time in a contract.
Smart contracts are immutable, which means smart contracts can never be changed and no one can tamper with or break the contract.
We assume that every operation on any blockchain can be completed within a known amount of time $\Delta$.
In particular, every party can publish a contract within $\Delta$ unit time and every party can trigger a contract within $\Delta$ unit time (if he has all information to satisfy all conditions of the contract).
Blockchains require each party $v$ to create the public key $pkey_v$ and the private key $skey_v$ for transactions.
A \emph{signature} $sign(x,v)$ is the the signature on data $x$ signed with $skey_v$.
Every party can verify the signature $sign(x,v)$ with the public key $pkey_v$.\par

\subsection{Swap model}\label{swap model}

Parties $v_1,v_2,\dots ,v_n$ are given a digraph $D=(V,A)$, where each vertex in $V$ represents a party, and each arc $(u,v) \in A$ represents an asset on a blockchain to be transferred from party $u$ to party $v$.
When ownership of an asset on an arc $(u,v)$ transfers from $u$ to $v$, we say the asset transfer happens, or the arc is triggered. \par

We denote the corresponding asset for arc $(u,v) \in A$ as $a_{u,v}$. 
The value of each asset may vary from one party to another; for example, $v$ may find a higher value on the asset than $u$.
Otherwise, that is, if all the parties see the same value about an asset, exchanging their assets following $D=(V,A)$ is meaningless (no one can gain unless someone loses.).
The value of $a_{u,v}$ for $u$ (resp. $v$) is denoted by $value^-_{u,v}$ (resp. $value^+_{u,v}$).
We define $value^+_{D,v}$ as the sum of the values for $v$ of all entering arcs to $v$ in $D$, i.e., $value^+_{D,v} = \sum_{(u,v) \in A } value^+_{u,v}$.
We also define $value^-_{v,D} = \sum_{(v,u) \in A} value^-_{v,u}$ as the sum of the values for $v$ of all leaving arcs from $v$ in $D$.
\par

We assume that $D$ meets the following condition because otherwise party $v$ should not participate in a swap $D$.
\begin{center}
	$\forall v \in V : value^+_{D,v} > value^-_{v,D}$
\end{center}

Moreover, we assume that there exists no connected subgraph $D'=(V',A)$ of $D$ such that
\begin{center}
	\begin{eqnarray}
	\exists C \subset V' : (\sum_{v' \in C} (value^+_{D',v'} - value^-_{v',D'})  > \sum_{v' \in C} (value^+_{D,v'} - value^-_{v',D})) \\
	\land  (\forall v'' \notin C : value^+_{D',v''} - value^-_{v'',D'}  \ge  value^+_{D,v''} - value^-_{v'',D}).
	\end{eqnarray}
\end{center}

Inequality (1) means that one coalition $C$ in $D'$ gets more benefits in $D'$ than $D$, and
(2) means that any party in $D'-C$ gets equal to or more benefits in $D'$ than $D$.
If swap $D$ has such a subgraph $D'$, we can say that $D$ forces $C$ to perform disadvantage exchanges because $C$ gets larger benefits in swap $D'$.
Therefore, $D$ is not an appropriate swap.
This is the reason why we exclude such a swap.
As we will explain in Section \ref{discussion}, this assumption is weaker than the assumption that Herlihy \cite{1} makes to guarantee that no party and coalition can get more benefits by deviating from the protocol than following the protocol.
Even without this assumption, the proposed protocol is an atomic cross-chain swap protocol under another specific assumption about value function which Herlihy's protocol requires.
We discuss that in the Section \ref{discussion}.\par

A protocol is a strategy for a party, that is, a set of rules that determines which action the party takes.
Ideally, all parties in a swap $D=(V,A)$ follows the common protocol $P$.
However, we must consider the case that some party does not follow (i.e, deviates from) the common protocol to get larger benefits.
To make the matter worse, those parties may make a coalition and take actions cooperatively to get larger benefits in total (some party in the coalition willingly loses aiming at larger total benefits of the coalition) .
We must design a protocol such that any party following the protocol does not suffer from a loss even if such selfish parties or coalitions exist.
We assume that every party in a swap $D=(V,A)$ can send any message to any party in the swap.
\emph{Space complexity} of a protocol is measured by the total number of bits required store information on all blockchains in a swap.
\emph{Local time complexity} of a protocol is measured by the maximum execution time of a contract in a swap to transfer the corresponding asset.

\begin{dfn}
	A swap protocol $P$ is \emph{uniform} if it guarantees the followings:
	\begin{itemize}
		\item If all parties follow $P$, all arcs in $D=(V,A)$ are triggered.
		\item Even if there are parties arbitrarily deviating from $P$, every party $v$ following $P$ gets all assets of entering arcs to $v$ or regains all assets of leaving arcs from $v$.
	\end{itemize}
\end{dfn}

\begin{dfn}
	A swap protocol $P$ is \emph{Nash equilibrium} if no party can get more benefits by deviating from $P$ than following $P$ unless he joins a coalition.
\end{dfn}

\begin{dfn}
	A swap protocol $P$ is \emph{strong Nash equilibrium} if it guarantees that no party and coalition can get more benefits by deviating from $P$ than following $P$.
\end{dfn}

\begin{dfn}
	A swap protocol $P$ is  \emph{atomic} if it guarantees to be uniform and strong Nash equilibrium.
\end{dfn}

\section{Proposed protocol}\label{protocol}
\subsection{Outline of proposed protocol}
\label{proposed protocol}
The proposed protocol $P$ consists of four phases.
In Phase 1, every party finds a common feedback vertex set $L$ of $D = (V, A)$ locally by using the same algorithm.
Although finding a minimum feedback vertex set is NP-hard \cite{8}, we do not need the minimum set, thus we can use any approximate solution.
For example, there exists an algorithm to find a 2-approximate solution \cite{9}.
The parties belonging to the vertex set $L = \{l_1,l_2,\dots, l_k\} $ are called leaders.
We call the other parties $f_1, f_2, \dots, f_{n-k}$ followers.
In this phase, every leader generates a secret which is a random bit string and calculates a hashlock based on the secret.
In Phase 2, a smart contract corresponding to each arc of $D$ is published.
Each leader spontaneously publishes contracts for all his leaving arcs.
Each follower publishes contracts for all his leaving arcs after confirming that the contracts for all his entering arcs are already published.
In Phase 3, each of the leaders $l_2, l_3, \dots, l_k$ sends its secret to $l_1$ after confirming that the contracts for all its entering arcs have been published.
The leader $l_1$ starts Phase 4 after $l_1$ confirms that the contracts for all the entering arcs to $l_1$ have been published and all secret have been sent from each other leaders.
In Phase 4, each arc of $D$ is sequentially triggered, which starts from leader $l_1$.\par

As described in Section 3.2, we design the contracts in Phase 2 so as to guarantee the following three properties.
(i) For any arc, no party can trigger the corresponding contract without knowing the secrets of all leaders.
(ii) For any party $v$, if a contract on leaving arc from $v$ is triggered, $v$ can trigger the contracts published on all entering arcs to $v$. 
(iii) For any arc $(u,v)$, party $v$ can regain $a_{u,v}$ if the contract on $(u,v)$ are not triggered during a certain period after it was published.
If a leader $l_i$ follows the proposed protocol, $l_i$ sends its secret to the leader $l_1$ after confirming that the contracts on all entering arcs have been published.
Therefore, from Property (i) and (ii), it is guaranteed that no leaving arc from $l_i$ is triggered unless all entering arcs to $l_i$ are triggered.
Moreover, from property (iii), after a certain period of time, $l_i$ gets assets of all entering arcs to $l_i$, or regains assets of all leaving arcs from $l_i$.
If a follower $f_i$ follows the proposed protocol, $f_i$ publishes the contracts for all leaving arcs after confirming that the contracts for all entering arcs have been published.
Therefore, according to the same argument, after a certain period of time, $f_i$ gets assets of all entering arcs to $f_i$, or regains assets of all leaving arcs from $f_i$.\par

In the following, we describe the trigger condition and regain condition of the smart contract in Section 3.2, and the detailed operation in Phases 1, 2, 3 and 4 in Sections 3.3, 3.4, 3.5 and 3.6, respectively.
We describe codes for a smart contract, and the behaviors of the the leader $l_1$, the other leaders $l_2,l_3,\dots,l_k$ and followers in Algorithm 1, 2, 3, 4.
We distinguish entering arcs to each party $v$ by locally labeling those with $(1,v),(2,v),...,(p,v)$ where $p$ is the number of the entering arcs.
We also label leaving arcs from each party $v$ with $(v,1),(v,2),...,(v,q)$ where $q$ is the number of the leaving arcs.
In the codes, we abbreviate Smart Contract to SC.

\subsection{The conditions of smart contracts}\label{condition}
In this subsection, we specify what each party writes in a contract.
In a contract, its publisher specifies the conditions for transferring and regaining the asset.
Each leader $l_i$ makes hashlock $H(s_i)=h_i$ with random secret data $s_i$, called a secret, where $H()$ is a cryptographic hash function common to all parties.
In any blockchain, as mentioned in Section \ref{contract model}, time $\Delta$ is enough to publish and trigger a contract.   
The condition of the contract on arc $(u,v)$ to transfer asset $a_{u,v}$ is as follows:
Asset $a_{u,v}$ is transferred to counterparty $v$ if $v$ sends the secrets $s_1,s_2,\dots,s_k$ of all leaders and $x$ signatures \footnote{$sign((s_1,s_2,\dots ,s_k),v_1),sign((s_1,s_2,\dots ,s_k),v_2),\dots,sign((s_1,s_2,\dots ,s_k),v_x))$} on k-tuple $(s_1,s_2,\dots,s_k)$ by arbitrary $x$ parties to the contract before time $t_s + ((diam(D) + 1)\Delta + 2\varepsilon) + x\Delta$, where $t_s$ is the starting time of the protocol.
That is, the more a party collects the signatures of the parties, the later he can trigger the contracts on his entering arcs.
Let $\varepsilon(<< \Delta)$ be the time required to complete each of Phases 1 and 3.
We set the deadline of the smart contract to time $t_s + ((diam(D)+1)\Delta + \varepsilon)+n\Delta$, so that the publisher can regain his asset on the contract if the corresponding contract is not triggered before that time.

\begin{algorithm}[H]
\caption{Contents of smart contract}
\begin{algorithmic}[1]
\label{algo1}
\STATE  \textsc \bf{Swap Contract}
\STATE \Int $h_1,h_2,\dots ,h_k$;
\STATE \Int $pkey_1,pkey_2,\dots ,pkey_n$
\STATE address party;
\STATE address counterparty;
\STATE \Uint start = protocol starting time;
\STATE
\STATE function transfer(\Int $s_1,s_2,\dots ,s_k$, \Int $sig_1,sig_2,\dots ,sig_k$) 
\STATE \tab require (msg.sender == counterparty);
\STATE \tab \If \ $H(s_i)$ = $h_i$ holds for all $i=1,2,\dots ,k$ and \\
there exist $x$ distinct parties $u_1,u_2,\dots ,u_x$ in $V$ such that $sig_1,sig_2, \dots ,sig_x$ are correct signature on $s_1,s_2,\dots ,s_k$ by $u_1,u_2,\dots ,u_x$ respectively and \\
$now < start + (diam(D)+x+1)\Delta + \varepsilon$
\STATE \tab \tab \Then transfer asset to counterparty;
\STATE \tab \tab halt;
\STATE \tab \End \If
\STATE 
\STATE function timeout()
\STATE \tab require (msg.sender == party);
\STATE \tab \If \ $now > start + (diam(D)+n+1)\Delta + \varepsilon$
\STATE \tab \tab \Then return asset to party;
\STATE \tab \tab \tab halt;
\STATE \tab \End \If
\end{algorithmic}
\end{algorithm}

\begin{algorithm}[H]
\caption{Behavior of top-leader $l_1$}
\begin{algorithmic}[1]
\label{algo1}
\STATE  \textsc \bf{Phase 1}
\STATE determine a feedback vertexes set ($l_1,l_2,\dots ,l_k$);
\STATE create secret $s_1$ and hashlock $h_1 = H(s_1)$;
\STATE broadcast $h_1$ to all parties (including $l_1$ itself);
\STATE broadcast $pkey_{l_1}$ to all parties (including $l_1$ itself);
\STATE receive each hashlock $h_i$ from each leader $l_i$ with $1 \le i \le k$;
\STATE receive each public key $pkey_i$ from each party $v_i$ with $1\le i \le n$;
\STATE 
\STATE  \textsc \bf{Phase 2}
\STATE create and publish SC for all leaving arcs from $l_1$;
\STATE
\STATE \textsc \bf{Phase 3}
\STATE receive each secret $s_i$ from each leader $l_i$ with $2 \le i \le k$;
\STATE \If \ $\exists i \in \{2,3,\dots,k\}: H(s_i) \neq h_i or \exists j \in \{1,2,\dots,p\}$: a consistent contract is not published on arc $(j,l_1)$  or $now \ge start + (diam(D)+1)\Delta + 2\varepsilon$
\STATE \tab \Then \Goto \ \textsc{Regain};
\STATE \End \If
\STATE \If \ $\exists i \in \{2,3,\dots,p\}: (i,l_1)$ don't have SC 
\STATE \tab \Then \Goto \ \textsc{Regain}; 
\STATE \End \If
\STATE
\STATE \textsc \bf{Phase 4}
\STATE $sig_{l_1} \gets$ my signature on $(s_1,s_2,\dots ,s_k)$;
\STATE \For \ $i = 1$ to $p$
\STATE \tab call function transfer($(s_1,s_2,\dots ,s_k)$,$(sig_{l_1})$) for the contract on $(i,l_1)$;
\STATE \End \For
\STATE halt;
\STATE
\STATE \textsc \bf{Regain}
\STATE \Wait \ until $now > start + (diam(D)+n+1)\Delta + 2\varepsilon$;
\STATE call function timeout() for the contracts on all non-triggered leaving arcs from $l_1$;
\end{algorithmic}
\end{algorithm}

\begin{algorithm}[H]
\caption{Behavior of sub-leader $l_j$}
\begin{algorithmic}[1]
\label{algo1}
\STATE  \textsc \bf{Phase 1}
\STATE determine a feedback vertexes set ($l_1,l_2,\dots ,l_k$);
\STATE create secret $s_j$ and hashlock $h_j = H(s_j)$;
\STATE broadcast $h_j$ to all parties (including $l_j$ itself);
\STATE broadcast $pkey_{l_j}$ to all parties (including $l_j$ itself);
\STATE receive each hashlock $h_i$ from each leader $l_i$ with $1 \le i \le k$;
\STATE receive each public key $pkey_i$ from each party $v_i$ with $1\le i \le n$;
\STATE 
\STATE  \textsc \bf{Phase 2}
\STATE create and publish SC for all leaving arcs from $l_j$;
\STATE
\STATE \textsc \bf{Phase 3}
\STATE \Wait \ until $\forall i \in \{1,2,\dots,p\}$: a consistent contract is published on arc $(i,l_j)$;
\STATE \If \ $now > start + (diam(D)+1)\Delta + 2\varepsilon$;
\STATE \tab \Then \Goto \ \textsc{Regain}; 
\STATE \End \If
\STATE send $s_j$ to top-leader $l_1$;
\STATE
\STATE \textsc \bf{Phase 4}
\STATE \Wait \ until any leaving arc is triggered or $now > start + (diam(D)+n+1)\Delta + 2\varepsilon$; 
\STATE \If \ $now > start + (diam(D)+n+1)\Delta + 2\varepsilon$
\STATE \tab \Then \Goto \ \textsc{Regain};
\STATE \tab \Else get $s_1,s_2,\dots ,s_k$ and $sig_1,sig_2,\dots ,sig_x$ from the triggered arc;
\STATE \tab $sig_{l_j} \gets$ my signature on $(s_1,s_2,\dots ,s_k)$;
\STATE \tab \For \ $i = 1$ to $p$
\STATE \tab \tab call function transfer($(s_1,s_2,\dots ,s_k)$,$(sig_1,sig_2,\dots ,sig_x and sig_{l_j})$) for the contract on $(i,l_j)$;
\STATE \tab \End \For
\STATE \tab halt;
\STATE \End \If
\STATE
\STATE \textsc \bf{Regain}
\STATE \Wait \ until $now > start + (diam(D)+n+1)\Delta + 2\varepsilon$;
\STATE call function timeout() for the contracts on all non-triggered leaving arcs from $l_j$;
\end{algorithmic}
\end{algorithm}

\begin{algorithm}[H]
\caption{Behavior of follower $f$}
\begin{algorithmic}[1]
\label{algo1}
\STATE  \textsc \bf{Phase 1}
\STATE determine a feedback vertexes set ($l_1,l_2,\dots ,l_k$);
\STATE broadcast $pkey_f$ to all parties (including $l_j$ itself);
\STATE receive each hashlock $h_i$ from each leader $l_i$ with $1 \le i \le k$;
\STATE receive each public key $pkey_f$ from each party $v_i$ with $1\le i \le n$;
\STATE 
\STATE  \textsc \bf{Phase 2}
\STATE \Wait \ until all entering arcs have SC or $now > start + (diam(D)+1)\Delta + \varepsilon$; 
\STATE \If \ $now > start + (diam(D)+1)\Delta + \varepsilon$
\STATE \tab \Then \Goto \ \textsc{Regain};
\STATE \tab \Else create and publish SC for all leaving arcs from $f$;
\STATE \End \If
\STATE
\STATE \textsc \bf{Phase 3}
\STATE 
\STATE \textsc \bf{Phase 4}
\STATE \Wait \ until any leaving arc is triggered or $now > start + (diam(D)+n+1)\Delta + 2\varepsilon$; 
\STATE \If \ $now > start + (diam(D)+n+1)\Delta + 2\varepsilon$
\STATE \tab \Then \Goto \ \textsc{Regain};
\STATE \tab \Else get $s_1,s_2,\dots ,s_k$ and $sig_1,sig_2,\dots ,sig_x$ from the triggered arc;
\STATE \tab $sig_f \gets $ my signature on $(s_1,s_2,\dots ,s_k)$;
\STATE \tab \For \ $i = 1$ to $p$
\STATE \tab \tab call function transfer($(s_1,s_2,\dots ,s_k)$,$(sig_1,sig_2,\dots ,sig_x and sig_f)$) for the contract on $(i,f)$;
\STATE \tab \End \For
\STATE \tab halt;
\STATE \End \If
\STATE
\STATE \textsc \bf{Regain}
\STATE \Wait \ until $now > start + (diam(D)+n+1)\Delta + 2\varepsilon$;
\STATE call function timeout() for the contracts on all non-triggered leaving arcs from $f$;
\end{algorithmic}
\end{algorithm}

\newpage
\subsection{Phase 1 : Preparation}
Every party takes $D=(V,A)$ as input.
At first, he chooses leaders $l_1,l_2,\dots ,l_k$ such that the leaders form a feedback vertex set of $D$.
Since all parties are given the same swap $D=(V,A)$ as input, they can independently find the same leaders $l_1,l_2,\dots ,l_k$ without communication.
We call $l_1$ the \emph{top leader} and the other leaders $l_2,l_3,...,l_k$ \emph{sub-leaders}.\par
Next, each leader $l_i$ generates a sequence of random bits $s_i$ called a secret and computes hashlock $h_i = H(s_i)$, after which it sends only the hashlock to all parties.
Finally, all parties send their public keys $pkey_{v_1},pkey_{v_2},...,pkey_{v_n}$ for verifying their signatures.
We assume that all of these can be done in time $\varepsilon \ll \Delta$.

\subsection{Phase 2 : Publication}
Every leader spontaneously publishes contracts for all their leaving arcs with the conditions described in Section \ref{condition}.\par 

When a follower finds that the contracts on all his entering arcs are already published, he checks whether the contents of the contracts are consistent with the conditions of Section \ref{condition}.
Especially, he checks whether the public keys of all the parties and the hashlocks of all leaders that he receives in Phase 1 match the public keys of all parties and the hashlocks of all leaders specified in the contracts on entering arcs.
He quits the swap without publishing any contract if those public keys or hashlocks do not match.
Thus, even if the party deviating from $P$ may send a fake hashlock or a public key to the other parties in Phase 1, no other party following $P$ suffers from a loss.
If all the published contracts are consistent, he publishes contracts for all his leaving arcs.
As will be described later, each party $v$ reveals the signature on $s_1,s_2,\dots ,s_k$ with $v$'s secret key $skey_v$ only when $v$ triggers all entering arcs.
Therefore, all entering arcs of $v$ can be triggered if any leaving arc of $v$ is triggered.
Hence, no follower suffers from a loss, that is, every follower gets assets of all his entering arcs or regains assets of all his leaving arcs.\par
As described in Section \ref{proof}, Phase 2 can be completed within at most $(diam(D)+1)\Delta$ time.

\subsection{Phase 3 : Share secrets}
Every sub-leader reveals his secret to the top-leader if the contracts are published on all his entering arcs.
This is to ensure that the top-leader can trigger contracts at first in Phase 4.

The top-leader confirms whether or not the secrets acquired from the sub-leaders are correct using hashlocks shared at Phase 1.
If the top-leader finds an incorrect secret, it quits the swap without going to Phase 4.

We assume that Phase 3 can be done in time $\varepsilon \ll \Delta$.

\subsection{Phase 4 : Trigger}
The top-leader starts Phase 4 if he acquires the secrets of all the sub-leaders within time $(diam(D)+1)\Delta + 2\varepsilon$ which is enough to complete Phases 1, 2 and 3.
This is because the top-leader only need to send secrets of all leaders $s_1,s_2,...,s_k$ and its signature within the time $t_s + ((diam(D)+1)\Delta + 2\varepsilon)+\Delta$ to all his entering arcs in order to get the assets of all his entering arcs.
The top-leader sends these secrets and his signature to all his entering arcs, by which he triggers the contracts and acquires all assets of the entering arcs.\par

If the top-leader deviates from $P$, he may not trigger some (or all) contracts that transfer assets to the top leader.
However, those actions are irrational. 
We explain the reason in Lemma \ref{lem8}, (Section 4).\par

Next, we describe the behavior of the sub-leaders and followers in Phase 4.
Each party $v$ of them waits until any of his leaving arcs is triggered.
Consider that one of his leaving arcs is triggered with the information of the secrets $s_1, s_2, \dots, s_k$ and the signature of $x$ distinct parties $(1 \le x < n)$ on k-tuple ($s_1,s_2,\dots,s_k$).
By definition of contracts, the leaving arc must be triggered before $t_s + (diam(D)+1)\Delta+2\epsilon+x\Delta$. 
Then, party $v$ acquires from the contract the secrets $s_1, s_2, \dots, s_k$ and those $x$ signatures within time $t_s + (diam(D) + 1)\Delta + 2\varepsilon +x\Delta$.
He immediately sends $s_1, s_2, \dots, s_k$ and the $x$ signatures in addition to his signature (thus $x+1$ signatures in total) to all entering arcs.
As a result, $v$ sends all necessary information to the contracts on all his entering arcs before time $t_s+ ((diam(D)+1)\Delta + 2\varepsilon + (x+1)\Delta$, which guarantees that all his entering arcs are triggered.
We describe these in Figure \ref{fig:4}.\par

\begin{figure}[htbp]
	\begin{center}
		\includegraphics[width= 0.6 \linewidth,clip]{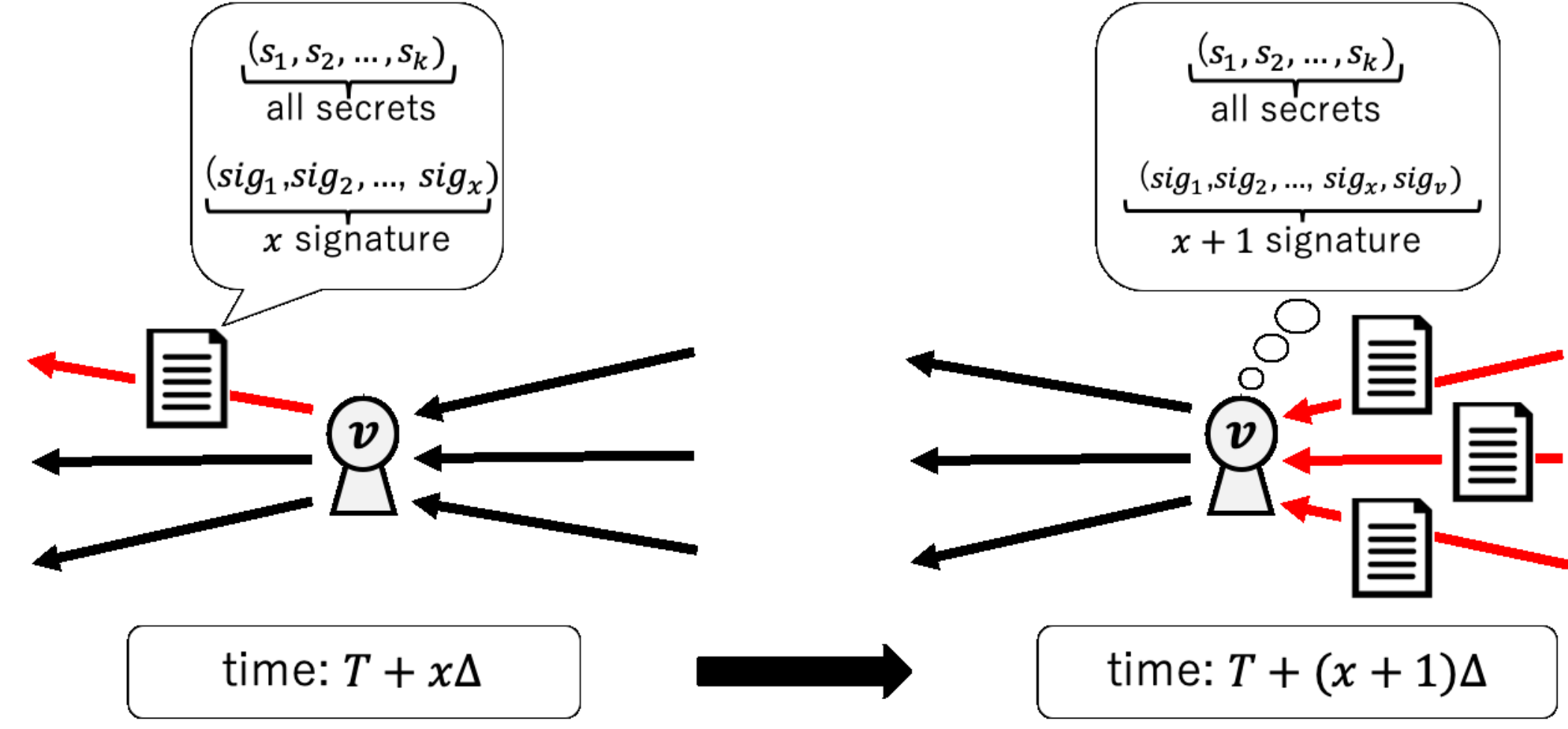}
		\caption{The party $v$ can trigger all his entering arcs when a leaving arc is triggered.}
		\label{fig:4}
	\end{center}
\end{figure}

Every party regains the asset of each of leaving arcs if it is not triggered by the deadline (specified by the timeout) $t_s + (diam(D)+n+1)\Delta + 2\varepsilon$.\par

\section{Correctness and Complexity of Protocol}\label{proof}
We prove that proposed protocol $P$ is atomic (i.e., uniform and strong Nash equilibrium).
First, we prove that $P$ is uniform.\par

\begin{lemma}\label{lem1}
	Assume that top-leader $l_1$ follows $P$. If any leaving arc of $l_1$ is triggered, all of entering arcs to $l_1$ are triggered.
\end{lemma}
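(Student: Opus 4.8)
The plan is to argue by an information-flow (secret-revelation) argument rather than by a direct timing computation. The core observation is that, by the trigger condition specified in Section~\ref{condition}, \emph{any} arc can be triggered only by a party that submits all $k$ leader secrets $s_1,s_2,\dots,s_k$; in particular, every triggered arc requires the top-leader's secret $s_1$. So I would first reduce the lemma to a single claim: that $s_1$ becomes available to any party other than $l_1$ only as a consequence of $l_1$ triggering all of its entering arcs.

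To establish that claim I would combine two facts. First, since $l_1$ follows $P$, in Phase~1 it broadcasts only the hashlock $h_1=H(s_1)$ and never $s_1$ itself, and by the one-wayness of the cryptographic hash function $H$ no party can recover $s_1$ from $h_1$; hence $s_1$ is known only to $l_1$ until $l_1$ transmits it. Second, inspecting the top-leader's code (Algorithm~2), the only point at which $l_1$ transmits $s_1$ is the Phase~4 loop, where it calls \texttt{transfer}$((s_1,\dots,s_k),(sig_{l_1}))$ on the contract of every entering arc $(i,l_1)$, $i=1,\dots,p$. Thus $s_1$ is published on a blockchain precisely at the moment $l_1$ submits the secrets together with its single signature to \emph{all} of its entering-arc contracts at once.

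With this claim in hand the logic is immediate: if some leaving arc $(l_1,w)$ is triggered, then $s_1$ must already have been revealed, so $l_1$ must already have executed its Phase~4 loop and therefore submitted the triggering information to the contracts on all of its entering arcs. It then remains to verify that each such submission actually triggers the arc, i.e.\ meets the contract's deadline with $x=1$ signature, which is the routine timing step. The Phase~3 guard ensures $l_1$ enters Phase~4 only while $now < t_s + (diam(D)+1)\Delta + 2\varepsilon$, and each \texttt{transfer} call completes within $\Delta$; hence the secrets and one signature reach every entering-arc contract before $t_s + (diam(D)+1)\Delta + 2\varepsilon + 1\cdot\Delta$, which is exactly the $x=1$ deadline of the trigger condition. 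Since $l_1$ is the unique counterparty permitted to invoke \texttt{transfer} on its entering arcs (the \texttt{require(msg.sender == counterparty)} check) and no contract can time out before $t_s+(diam(D)+n+1)\Delta+2\varepsilon$, every entering arc of $l_1$ is triggered.

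The main obstacle is making the secret-revelation claim rigorous: one must argue that $s_1$ genuinely cannot leak by any route other than $l_1$'s Phase~4 \texttt{transfer} calls. This rests on the cryptographic assumption that $H$ is one-way together with a careful reading of $l_1$'s code confirming that $s_1$ is never sent anywhere else. Once that independence is granted, the temporal ordering ``$l_1$ triggers all entering arcs $\Rightarrow$ $s_1$ becomes public $\Rightarrow$ a leaving arc can be triggered'' and the deadline arithmetic above are both straightforward, so I expect the bulk of the writing to concern justifying the information-flow step and handling the (vacuous) case where $l_1$ has no entering arc.
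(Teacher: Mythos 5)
Your proposal is correct, and it reaches the conclusion by a genuinely different route than the paper. The paper's proof is a compressed proof by contradiction in which $l_1$ is treated \emph{reactively}, exactly as the sub-leaders are treated in Lemma~\ref{lem2}: it asserts that after the leaving arc $(l_1,v)$ is triggered, $l_1$ ``immediately obtains the secrets of all leaders'' before time $t_s+(diam(D)+1)\Delta+2\varepsilon$ and then triggers each entering arc with its own signature. As written, that timing assertion is not self-evident (a contract with $x$ signatures may be triggered as late as just before $t_s+(diam(D)+x+1)\Delta+2\varepsilon$), and it does not quite match Algorithm~2, in which the top-leader never waits for a leaving arc at all but triggers its entering arcs proactively upon entering Phase~4. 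Your information-flow argument supplies precisely the missing justification: since $l_1$ follows $P$, every contract it publishes on a leaving arc carries the correct hashlock $h_1$, so triggering any leaving arc requires $s_1$; by one-wayness of $H$ and inspection of Algorithm~2, $s_1$ is revealed only by $l_1$'s Phase~4 \texttt{transfer} calls, which occur only after the Phase~3 checks succeed before $t_s+(diam(D)+1)\Delta+2\varepsilon$ and which already submit the secrets plus one signature to \emph{all} entering arcs within the $x=1$ deadline $t_s+(diam(D)+2)\Delta+2\varepsilon$. In other words, you invert the paper's temporal ordering (Phase~4 causes the leaving-arc trigger, rather than the trigger causing $l_1$ to act), which both matches the actual code and legitimizes the paper's timing bound; the price is that your argument explicitly invokes the cryptographic assumption on $H$, which the paper leaves implicit, and it loses the uniformity with the sub-leader/follower proofs (Lemmas~\ref{lem2} and \ref{lem3}), where the reactive ``read the secrets and $x$ signatures off the triggered arc'' template is genuinely needed. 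One small point you gloss over: the Phase~3 guard you cite for timing also performs the consistency check on the entering-arc contracts (correct hashlocks and public keys), and that check is what guarantees the \texttt{transfer} calls actually satisfy the contracts' hash conditions; it is worth stating when you make the ``routine timing step'' rigorous.
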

\begin{proof}
	Assume by contradiction that there is an execution of $P$ such that a leaving arc $(l_1,v)$ of the top-leader $l_1$ is triggered and an entering arc $(u,l_1)$ is not triggered.
	After $(l_1,v)$ is triggered, $l_1$ immediately obtains secrets of all leaders, which is before the time $t_s + diam(D)+1)\Delta+2\varepsilon$.
	Thus, $l_1$ sends all the secrets and its own signature to the contract on $(u,l_1)$ before the time $t_s + diam(D)+1)\Delta+2\varepsilon$, which means that $l_1$ triggers $(u,l_1)$ in the execution.
	This contradicts the assumption.
\end{proof}

\begin{lemma}\label{lem2}
	Assume that sub-leader $l_i (i \neq 1)$ follows $P$. If any leaving arc of $l_i$ is triggered, all of entering arcs to $l_i$ are triggered.
\end{lemma}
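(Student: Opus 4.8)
The plan is to mirror the argument of Lemma \ref{lem1}, the only essential change being that a sub-leader $l_i$ does not possess the other leaders' secrets in advance and must instead extract them, together with the accompanying signatures, from whichever of its leaving arcs is triggered. I would argue by contradiction: suppose there is an execution of $P$ in which some leaving arc $(l_i,v)$ is triggered while some entering arc $(u,l_i)$ is not. By the transfer condition of Algorithm 1, the arc $(l_i,v)$ can only be triggered when it is supplied with the secrets $s_1,\dots,s_k$ of all leaders together with correct signatures on $(s_1,\dots,s_k)$ by $x$ distinct parties, and this occurs at some time before $t_s+(diam(D)+x+1)\Delta+\varepsilon$ for that value of $x$.

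Next I would observe that $l_i$ is not among those $x$ signers. Indeed, a sub-leader following $P$ produces its own signature on $(s_1,\dots,s_k)$ only inside Phase 4, at the moment it invokes \textsc{transfer} on every entering arc; hence if $l_i$'s signature already existed it would have already triggered all of its entering arcs, contradicting the assumption that $(u,l_i)$ is untriggered. Consequently $x\le n-1$, the trigger time of $(l_i,v)$ precedes $t_s+(diam(D)+n)\Delta+\varepsilon$, and in particular $l_i$ is still inside its Phase 4 waiting window rather than in \textsc{Regain}.

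Then, since $l_i$ follows $P$, upon detecting that $(l_i,v)$ is triggered it reads $s_1,\dots,s_k$ and the $x$ signatures off that arc, appends its own (genuinely new) signature, and within one further $\Delta$ calls \textsc{transfer} on all entering arcs with these $x+1$ distinct signatures. The resulting message reaches the contract on $(u,l_i)$ before $t_s+(diam(D)+x+2)\Delta+\varepsilon$, which is exactly the transfer deadline of that contract for $x+1$ signatures. Hence $(u,l_i)$ is triggered, a contradiction, and the lemma follows.

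The crux is the bookkeeping on the timelocks: the $+x\Delta$ schedule grants precisely one extra $\Delta$ for each additional signature, so inheriting $x$ signatures and adding the one that $l_i$ alone can supply buys exactly the single $\Delta$ needed for $l_i$'s own trigger action. I therefore expect the main point to verify carefully to be that $l_i$'s signature is the $(x+1)$-th distinct signature, equivalently that $l_i$ has not already signed, since this is what simultaneously keeps $x+1\le n$ and guarantees the augmented deadline $t_s+(diam(D)+x+2)\Delta+\varepsilon$ still lies within the contract's window. The timing inequalities themselves are routine once this is established.
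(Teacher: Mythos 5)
Your overall structure matches the paper's proof of Lemma~\ref{lem2} closely: contradiction setup, the observation that $l_i$'s signature cannot be among the $x$ signatures used to trigger $(l_i,v)$ (since a conforming sub-leader signs $(s_1,\dots,s_k)$ only at the moment it triggers all its entering arcs), hence $x\le n-1$, and the one-extra-$\Delta$ bookkeeping showing the $(x+1)$-signature deadline is met. However, you have a genuine gap: you never establish that a (consistent) contract actually exists on the entering arc $(u,l_i)$. Your final step, where $l_i$ ``calls \textsc{transfer} on all entering arcs,'' silently assumes these contracts were published, but $u$ may be a deviating party; sub-leaders, unlike followers, publish their leaving-arc contracts spontaneously in Phase~2 \emph{before} verifying their entering arcs, so nothing in what you wrote rules out the scenario in which $(l_i,v)$ is triggered while $(u,l_i)$ carries no contract at all --- in which case no action of $l_i$ can trigger it and your contradiction evaporates.

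The paper closes this hole with the secret-withholding mechanism, which is the real crux of this lemma and the essential point of difference from Lemma~\ref{lem1}: the contract on $(l_i,v)$ cannot be triggered without the secret $s_i$ of every leader, and a conforming $l_i$ sends $s_i$ (to the top-leader, in Phase~3) only after confirming that consistent contracts have been published on \emph{all} of its entering arcs. Since $H$ is a cryptographic hash function, no one else can supply $s_i$. Therefore the very fact that $(l_i,v)$ was triggered certifies that the contracts on all entering arcs to $l_i$, including $(u,l_i)$, were published (and consistent) beforehand, and only then does the timing argument you gave go through. You identified the ``$l_i$ has not already signed'' point as the crux, but for a sub-leader the equally indispensable ingredient is ``$l_i$ has not already revealed $s_i$,'' and your proof is incomplete without it.
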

\begin{proof}
	Assume for contradiction that there is an execution of $P$ such that a leaving arc $(l_i,v)$ of sub-leader $l_i$ is triggered but an entering arc $(u,l_i)$ is not triggered.
	All secrets $s_1,s_2,...,s_k$ and the signature of $x(<n)$ parties are revealed in the contract of $(l_i,v)$ by time $ t_s + (diam(D)+x+1)\Delta + 2\varepsilon$.
	By the assumption, those $x$ signatures do not include the signature of $l_i$ because $l_i$ reveals its signature only when it triggers the contracts of all entering arcs.
	The contract on $(l_i,v)$ requires the secret of all leaders and $l_i$ does not reveal its secret $s_i$ until the contracts of all its entering arcs are published.
	Therefore, the contracts of all entering arcs of $l_i$ are published before time $t_s + (diam(D)+x+1)\Delta + 2\varepsilon$.
	From the above, $l_i$ sends the secrets $s_1,s_2,\dots ,s_k$ and the $x$ signatures in addition to its signature on $s_1,s_2,\dots ,s_k$ to the contract on $(u,l_i)$ by time $t_s + (diam(D)+x+1)\Delta + 2\varepsilon$.
	This implies that edge $(u,l_i)$ is triggered in the execution.
	This is a contradiction.
\end{proof}

\begin{lemma}\label{lem3}
	Assume that follower $f$ follows $P$. If any leaving arc of $f$ is triggered, all of entering arcs to $f$ are triggered.
\end{lemma}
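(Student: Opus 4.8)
The plan is to mirror the proof of Lemma~\ref{lem2} almost verbatim, replacing the sub-leader's secret-withholding argument with the follower's Phase~2 publication discipline. I would argue by contradiction: suppose there is an execution of $P$ in which some leaving arc $(f,v)$ of the follower $f$ is triggered while some entering arc $(u,f)$ is not triggered. The goal is to show that $f$, following $P$, is forced to trigger $(u,f)$ within its deadline, yielding the contradiction.

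First I would fix the moment at which $(f,v)$ is triggered. By the trigger condition of the contract, this happens with the secrets $s_1,\dots,s_k$ of all leaders and the signatures of some $x$ distinct parties on the $k$-tuple $(s_1,\dots,s_k)$ exposed in the contract, and it occurs before time $t_s + (diam(D)+x+1)\Delta + 2\varepsilon$ for some $x$ with $1 \le x < n$. Next I would observe that these $x$ signatures cannot include $f$'s own signature: by Phase~4 of $f$'s protocol, $f$ signs $(s_1,\dots,s_k)$ and releases that signature only at the instant it triggers the contracts on all of its entering arcs, and since $(u,f)$ is assumed untriggered, $f$ has not yet done so. Then comes the step that distinguishes this lemma from Lemma~\ref{lem2}: because $(f,v)$ is published (it is triggered), and because a follower following $P$ publishes contracts on its leaving arcs only after confirming that the contracts on all of its entering arcs are already published, every entering arc to $f$ --- in particular $(u,f)$ --- carries a published contract. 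Finally, $f$ reads $s_1,\dots,s_k$ and the $x$ signatures off the triggered arc $(f,v)$, appends its own signature to obtain $x+1$ signatures, and calls \texttt{transfer} on $(u,f)$ before time $t_s + (diam(D)+x+2)\Delta + 2\varepsilon$, which is within the trigger deadline of a contract presented with $x+1$ signatures. Hence $(u,f)$ is triggered, contradicting the hypothesis.

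The main obstacle is isolating exactly where the follower differs from a sub-leader and checking that the substituted reason is airtight. For a sub-leader, Lemma~\ref{lem2} concludes that all entering arcs are published from the fact that the leader withholds its secret $s_i$ until those contracts appear; a follower has no secret, so this route is unavailable, and I must instead invoke the Phase~2 rule directly. I would therefore take care to justify that ``a leaving arc of $f$ is published'' really does imply ``all entering arcs of $f$ are published'' under $P$, citing the Phase~2 behavior of Algorithm~4. The only other point demanding attention is the timing bookkeeping --- verifying that obtaining the data from $(f,v)$ and re-sending $x+1$ signatures costs at most one additional $\Delta$, so the deadline $t_s + (diam(D)+(x+1)+1)\Delta + 2\varepsilon$ for the entering-arc contract is met --- but this is the same routine accounting already used in Lemmas~\ref{lem1} and \ref{lem2}.
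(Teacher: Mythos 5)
Your proof is correct and follows essentially the same route as the paper: the paper's proof of Lemma~\ref{lem3} likewise observes that the Phase~2 rule forces all of $f$'s entering-arc contracts to be published before any leaving arc can be triggered, and then defers the rest to the argument of Lemma~\ref{lem2}. You have simply written out in full the deferred steps (the $x$ signatures excluding $f$'s own, and the $\Delta$-bookkeeping giving the deadline $t_s + (diam(D)+x+2)\Delta + 2\varepsilon$), with the substitution of the publication discipline for the secret-withholding argument identified exactly as the paper intends.
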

\begin{proof}
	Follower $f$ publishes a contract for a leaving arc from $f$ only after the contracts of all entering arcs to $f$ are published.
	Therefore, all entering arcs to $f$ are already published when any leaving arc from $f$ is triggered.
	We can prove the lemma in the same way as the proof of Lemma \ref{lem2}.
\end{proof}

\begin{lemma}\label{lem4}
	Assume that party $v$ follows $P$. If any leaving arc of $v$ is triggered, all of entering arcs to $v$ are triggered.
\end{lemma}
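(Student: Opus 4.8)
The plan is to prove this by an exhaustive case analysis on the role that party $v$ plays in the protocol, reducing each case to one of the three preceding lemmas. The key observation is that Phase 1 partitions the vertex set $V$ into three pairwise-disjoint and jointly exhaustive classes: the top-leader $l_1$, the sub-leaders $l_2, l_3, \dots, l_k$, and the followers $f_1, \dots, f_{n-k}$. Here $L = \{l_1, \dots, l_k\}$ is the feedback vertex set computed in Phase 1, and the complement $V \setminus L$ is by definition exactly the set of followers. Since every party belongs to precisely one of these classes, it suffices to establish the conclusion for an arbitrary representative of each.

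First I would dispatch the three cases in turn. If $v = l_1$, then since $v$ follows $P$ by hypothesis, Lemma \ref{lem1} yields directly that whenever a leaving arc of $l_1$ is triggered, all entering arcs to $l_1$ are triggered. If $v = l_i$ for some $i \neq 1$, the identical conclusion is supplied by Lemma \ref{lem2}. Finally, if $v$ is a follower, Lemma \ref{lem3} applies. In each case the statement obtained is verbatim the assertion of Lemma \ref{lem4} for that particular $v$, so combining the three cases proves the lemma in full generality.

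Honestly, there is no substantive obstacle here: Lemma \ref{lem4} is simply a role-independent repackaging of Lemmas \ref{lem1}, \ref{lem2}, and \ref{lem3}, stated so that later arguments about uniformity can invoke a single clause without repeating the case split. The only two points worth confirming explicitly are that the three classes are genuinely exhaustive (immediate, since $V = L \cup (V\setminus L)$ and $l_1$ is distinguished within $L$) and that the hypothesis ``$v$ follows $P$'' specializes correctly to ``$l_1$ follows $P$,'' ``$l_i$ follows $P$,'' or ``$f$ follows $P$'' in the respective cases (also immediate, as the partition is determined solely by the common computation of $L$ and is independent of any party's behavior). Both are consequences of the Phase 1 description, so the proof is a one-line reduction.
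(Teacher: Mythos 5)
Your proof is correct and takes essentially the same approach as the paper, whose proof of Lemma \ref{lem4} is precisely the observation that it follows immediately from Lemmas \ref{lem1}, \ref{lem2}, and \ref{lem3} via the exhaustive case split on whether $v$ is the top-leader, a sub-leader, or a follower. Your explicit check that these three roles partition $V$ and that the hypothesis ``$v$ follows $P$'' specializes correctly is a harmless elaboration of what the paper leaves implicit.
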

\begin{proof}
	The lemma immediately follows from Lemmas \ref{lem1}, \ref{lem2}, and \ref{lem3}
\end{proof}

\begin{lemma}\label{lem5}
	Let SC be the smart contract published on any leaving arc from party $v$ that follows $P$.
	It happens before time $t_s + (diam(D)+n+2)\Delta + 2\epsilon$ that contract SC is triggered or the asset of SC is regained by $v$.
\end{lemma}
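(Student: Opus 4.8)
The plan is to fix a party $v$ following $P$ together with a single contract SC published on one of its leaving arcs, and then to argue by a clean dichotomy on whether SC is ever triggered. In both branches the deadlines hard-coded into SC (Algorithm 1) combine with the model assumption that any party can publish or trigger a contract within $\Delta$ time to pin down the relevant time, and I will show each branch finishes strictly before $t_s + (diam(D)+n+2)\Delta + 2\varepsilon$ (recall $start = t_s$).

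The first branch is when SC is triggered by some counterparty. The \texttt{transfer} function of SC fires only while $now < start + (diam(D)+x+1)\Delta + \varepsilon$, where $x$ is the number of distinct signatures supplied; since there are $n$ parties we always have $x \le n$, so SC can be triggered only strictly before $t_s + (diam(D)+n+1)\Delta + \varepsilon$, which already beats the claimed bound. The second branch is when SC is never triggered. Here I would trace the control flow of $v$'s role-specific algorithm to the \textsc{Regain} block, where $v$ waits until $now > t_s + (diam(D)+n+1)\Delta + 2\varepsilon$ and then calls \texttt{timeout()} for SC. Because $2\varepsilon > \varepsilon$, the timeout guard $now > start + (diam(D)+n+1)\Delta + \varepsilon$ of Algorithm 1 is satisfied, so the call returns the asset to $v$; by the model it completes within $\Delta$, so the regain occurs before $t_s + (diam(D)+n+2)\Delta + 2\varepsilon$, as required.

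The main obstacle is the second branch: I must verify that a party following $P$ genuinely reaches \textsc{Regain} and invokes \texttt{timeout()} for \emph{every} leaving arc that stays untriggered, and this must be checked across all three algorithms (top-leader, sub-leader, follower) and each of their branches — the early \Goto\ \textsc{Regain} exits in Phases 2 and 3 as well as the Phase-4 path. The delicate case is the Phase-4 path in which $v$ observes one leaving arc triggered, triggers all its entering arcs (here invoking Lemma~\ref{lem4} for consistency), and then halts: I need to confirm that this \texttt{halt} does not strand an asset on a still-untriggered leaving arc, i.e. that $v$ still performs the post-deadline \textsc{Regain} clean-up on all non-triggered leaving arcs. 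Once that reading of \texttt{halt} is granted, the timing estimate above applies identically to all roles and closes the argument uniformly.
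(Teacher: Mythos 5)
Your proof is correct and takes essentially the same route as the paper's one-line argument: any triggering is forced by the transfer guard of Algorithm 1 to occur before $t_s+(diam(D)+x+1)\Delta+\varepsilon \le t_s+(diam(D)+n+1)\Delta+\varepsilon$, and otherwise $v$ invokes \texttt{timeout()} within $\Delta$ after the deadline $t_s+(diam(D)+n+1)\Delta+2\varepsilon$, yielding the stated bound. Your additional care --- verifying the $\varepsilon$ versus $2\varepsilon$ mismatch between Section 3.2 and Algorithm 1 so that the timeout guard is indeed satisfied, and flagging that the \texttt{halt} at the end of Phase 4 must be read as not cancelling the \textsc{Regain} clean-up of leaving arcs that remain untriggered (e.g.\ when a counterparty deviates by never claiming its asset) --- tightens loose ends that the paper's proof silently assumes away.
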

\begin{proof}
	If a leaving arc from $v$ is not triggered before time $t_s + (diam(D)+n+1)\Delta + 2\varepsilon$, party $v$ calls the timeout function of the contract on the arc within at most $\Delta$ unit time after the deadline $t_s + (diam(D)+n+1)\Delta + 2\varepsilon$.
\end{proof}

\begin{lemma}\label{lem6}
	If every party follows $P$, all arcs are triggered within the time $t_s + 2(diam(D)+1)\Delta + 2\varepsilon$.
\end{lemma}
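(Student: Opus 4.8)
The plan is to split the argument into (i) bounding the time at which the top-leader $l_1$ begins the triggering cascade of Phase 4, and (ii) an induction that follows this cascade backward through $D$ until every arc is triggered. For (i), since every party follows $P$, Phase 1 completes within $\varepsilon$, Phase 2 within $(diam(D)+1)\Delta$ (the bound stated in Section~\ref{proof}/Section 3.4), and Phase 3 within $\varepsilon$; hence $l_1$ has received all secrets $s_1,\dots,s_k$ from the sub-leaders and starts triggering its entering arcs no later than time $t_s+(diam(D)+1)\Delta+2\varepsilon$.

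The heart of the proof is an induction on $g(v):=dist(v,l_1)$, the forward distance from $v$ to $l_1$. I would prove: every party $v$ triggers all of its entering arcs no later than $t_s+(diam(D)+1)\Delta+2\varepsilon+(g(v)+1)\Delta$, doing so with $g(v)+1$ signatures of distinct parties on $(s_1,\dots,s_k)$ together with all $k$ secrets. The base case $g(v)=0$ is $v=l_1$: by (i) and the model's guarantee that a trigger takes at most $\Delta$, $l_1$ fires all its entering arcs within $\Delta$ using its single signature, i.e. with $x=1$ before $t_s+(diam(D)+1)\Delta+2\varepsilon+\Delta$, which is exactly the deadline the timelock admits for $x=1$. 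For the inductive step with $g(v)=g\ge 1$, the first edge of a shortest $v$-to-$l_1$ path certifies a successor $w$, $(v,w)\in A$, with $g(w)=g-1$; by the hypothesis $w$ triggers its entering arc $(v,w)$ by $t_s+(diam(D)+1)\Delta+2\varepsilon+g\Delta$, revealing all secrets and $g$ distinct signatures in that contract. Party $v$ then reads these, appends its own signature (distinct because a shortest path is simple), and within one further $\Delta$ sends the $g+1$ signatures and all secrets to every entering arc, triggering them by $t_s+(diam(D)+1)\Delta+2\varepsilon+(g+1)\Delta$, which is precisely the timelock for $x=g+1$.

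To conclude, I would observe that $g(v)=dist(v,l_1)\le diam(D)$ for every $v$ by definition of the diameter (finite since $D$ is strongly connected), so $g(v)+1\le diam(D)+1\le n$ keeps the signature count within the range the contracts admit, and every trigger time is bounded by $t_s+(diam(D)+1)\Delta+2\varepsilon+(diam(D)+1)\Delta=t_s+2(diam(D)+1)\Delta+2\varepsilon$. Since every arc $(u,v)$ is an entering arc of its head $v$, this bounds the triggering time of all arcs and gives the claim.

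The step I expect to be delicate is the timing: at each generation the trigger lands exactly on the timelock boundary $t_s+(diam(D)+1)\Delta+2\varepsilon+x\Delta$, so the induction has no slack, and I must account for the ``read the revealed secrets and signatures, add one signature, resend'' operation as fitting within a single $\Delta$, exactly as the model's definition of $\Delta$ permits. The conceptual point that makes the depth $diam(D)$ rather than a longest path is that a party acts as soon as \emph{any one} of its leaving arcs is triggered, so the cascade follows minimum-successor (shortest) paths to $l_1$; this min-over-successors behaviour, in contrast with the max-over-predecessors wait of Phase 2, is what lets me invoke $dist(v,l_1)\le diam(D)$. Finally, the whole bound rests on the Phase 2 publication bound of Section 3.4, which I would cite rather than reprove here.
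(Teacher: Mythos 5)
Your proof is correct and takes essentially the same route as the paper's: you bound Phases 1--3 by $t_s+(diam(D)+1)\Delta+2\varepsilon$ and then bound Phase 4 by $(diam(D)+1)\Delta$ via induction on $dist(v,l_1)$, exploiting the $x$-dependent timelocks --- this is precisely the paper's ``induction on distance from top-leader $l_1$,'' spelled out in more detail. One caveat worth fixing: since each operation takes \emph{at most} $\Delta$ rather than exactly $\Delta$, the first leaving arc of $v$ to be triggered need not lie on a shortest path to $l_1$, so your subsidiary claim that $v$ triggers with exactly $g(v)+1$ signatures (and the ``shortest path is simple'' justification for distinctness) over-specifies the execution; the time bound survives anyway because some leaving arc of $v$ is triggered by $t_s+(diam(D)+1)\Delta+2\varepsilon+g(v)\Delta$ and the contract deadline scales with whatever signature count $x$ actually appears, while distinctness of $v$'s appended signature should instead be justified as the paper does --- $v$ reveals its signature only when triggering its own entering arcs, so it can never occur among the signatures on $v$'s first-triggered leaving arc.
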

\begin{proof}
	If all parties follow $P$, each leader $l$ immediately publishes contracts for all leaving arcs from $l$ after the start of Phase 2, and each follower $f$ publishes contracts for all leaving arcs from $f$ immediately after the contracts for all entering arcs to $f$ are published.
	Since the time required to publish each contract is at most $\Delta$, we can see that all parties complete Phase 2 in $(diam(D) +1)\Delta$ time.
	Therefore, since the upper bound $\varepsilon$ of the time is required for the local computation and the message transmission of Phases 1 and 3, Phase 3 is completed by time $t_s + (diam(D)+1) \Delta + 2\varepsilon$.
	Thus, top-leader $l_1$ will immediately begin Phase 4.
	
	After Phase 4 has started, the leader $l_1$ immediately triggers the contracts of all entering arcs to $l_1$.
	Each party $v$ other than $l_1$ immediately triggers the contracts of all entering arcs to $v$, once a contract on any leaving arc from $v$ has been triggered.
	The signature on ($s_1,s_2,\dots,s_k$) by $v$ is never used to trigger the contract on the arc which is triggered for the first time among all leaving arcs from $v$.
	Therefore, $v$ triggers all entering arcs to $v$ by using the signatures used to trigger the leaving arc and his own signatures.
	Since the time required to trigger each contract is at most $\Delta$,  we can show, by the induction on distance from top-leader $l_1$, that the time required to trigger the contracts on all arcs in a given swap in Phase 4 is at most $(diam(D) + 1)\Delta$.
	Summing up these time for Phases 1 to 4, we can show that all arcs are triggered by time $t_s + 2(diam(D)+1)\Delta + 2\varepsilon$.
\end{proof}

\begin{lemma}\label{lem7}
	Protocol $P$ is uniform.
\end{lemma}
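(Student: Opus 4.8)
The plan is to treat Lemma~\ref{lem7} as a bookkeeping step that assembles the machinery already built in Lemmas~\ref{lem4}, \ref{lem5}, and \ref{lem6}, since the two clauses in the definition of \emph{uniform} correspond almost exactly to what those lemmas establish. Accordingly, I would structure the proof as one clause per earlier lemma, plus a clean case split.

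First I would dispatch the easy clause: when every party follows $P$, Lemma~\ref{lem6} already asserts that all arcs are triggered (indeed within time $t_s + 2(diam(D)+1)\Delta + 2\varepsilon$), so the first requirement of uniformity holds verbatim with no further work.

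The second clause is the substantive one. Fix an arbitrary party $v$ that follows $P$, while allowing every other party to deviate arbitrarily. The key idea is to set up a clean dichotomy on the leaving arcs of $v$. By Lemma~\ref{lem5}, each contract published on a leaving arc from $v$ is, by its deadline, either triggered or regained by $v$. I would then split into two cases. If at least one leaving arc from $v$ is triggered, Lemma~\ref{lem4} forces \emph{all} entering arcs to $v$ to be triggered, so $v$ obtains all assets of its entering arcs and the first disjunct holds. Otherwise no leaving arc from $v$ is triggered, so Lemma~\ref{lem5} guarantees that $v$ regains the asset of every leaving arc, and the second disjunct holds. Either way $v$ gets all entering assets or regains all leaving assets, which is precisely the no-loss guarantee.

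The main point to get right---rather than a genuine obstacle---is that the case analysis must be exhaustive and must rule out a ``mixed'' outcome in which $v$ simultaneously loses some leaving asset and fails to receive some entering asset; the dichotomy above does exactly this, since Lemma~\ref{lem4} converts a single triggered leaving arc into \emph{all} entering arcs being triggered, while the complementary case forces \emph{all} leaving arcs to be regained. I would also take care to read the definition's disjunction as an \emph{or} rather than a conjunction, so that each case need establish only one disjunct, and note that degenerate parties with no entering or no leaving arcs are covered vacuously.
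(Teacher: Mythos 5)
Your proposal is correct and matches the paper's argument: the paper's proof of Lemma~\ref{lem7} simply states that it follows immediately from Lemmas~\ref{lem4}, \ref{lem5}, and \ref{lem6}, and your case split (at least one leaving arc triggered versus none) is exactly the bookkeeping the paper leaves implicit. You have merely spelled out the details the paper omits, so there is no gap and no difference in approach.
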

\begin{proof}
	The lemma immediately follows from Lemmas \ref{lem4}, \ref{lem5}, and \ref{lem6}
\end{proof}

\begin{lemma}\label{lem8}
	Protocol $P$ is strong Nash equilibrium.
\end{lemma}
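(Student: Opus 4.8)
The plan is to show that for an arbitrary coalition $C$, no deviating strategy yields strictly larger total benefit for $C$ than conforming to $P$, thereby reducing the claim to the structural assumption on $D$ (inequalities (1)--(2) of Section \ref{swap model}). First I would fix a coalition $C$ and an arbitrary execution in which the members of $C$ deviate while all other parties follow $P$. Let $T \subseteq A$ be the set of arcs triggered in this execution and write $D_T = (V,T)$. The benefit of a party $v$ is the value of the entering arcs it receives minus the value of the leaving arcs it surrenders; since a regained asset costs nothing, this benefit equals $value^+_{D_T,v} - value^-_{v,D_T}$. If instead every party (including $C$) follows $P$, then by Lemma \ref{lem7} all arcs are triggered, so the total benefit of $C$ equals $\sum_{v \in C}(value^+_{D,v} - value^-_{v,D})$. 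Assume for contradiction that the deviation is strictly more profitable, i.e. $\sum_{v \in C}(value^+_{D_T,v} - value^-_{v,D_T}) > \sum_{v \in C}(value^+_{D,v} - value^-_{v,D})$.

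Next I would decompose $D_T$ into its weakly connected components. A party incident to no triggered arc contributes $0$ to the left side and $value^+_{D,v}-value^-_{v,D}>0$ to the right side, so isolated parties only lower the coalition's deviation profit; hence the strict inequality forces some non-trivial component $K^*$ in which $C \cap V(K^*)$ strictly gains. This is exactly inequality (1) for the subgraph $D' = K^*$ with coalition $C \cap V(K^*)$. Moreover, every triggered arc incident to a vertex of $K^*$ lies in $K^*$, so each party's benefit computed inside $K^*$ coincides with its overall benefit, which makes the reduction to $K^*$ legitimate. It then remains to verify inequality (2): every conforming party $v \in V(K^*) \setminus C$ satisfies $value^+_{K^*,v} - value^-_{v,K^*} \ge value^+_{D,v} - value^-_{v,D}$.

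The crux, which I expect to be the main obstacle, is this verification, handled by a case analysis on the triggered arcs incident to a conforming $v \in K^*$ (there is at least one, since $K^*$ is non-trivial). If some leaving arc of $v$ is triggered, Lemma \ref{lem4} yields that all entering arcs to $v$ are triggered, so $value^+_{K^*,v} = value^+_{D,v}$ while $value^-_{v,K^*} \le value^-_{v,D}$, and (2) follows. If no leaving arc of $v$ is triggered yet an entering arc is, then $v$ itself triggered it, because only the head of an arc may trigger it; a conforming follower or sub-leader triggers entering arcs only after observing a triggered leaving arc, which is excluded here, so $v$ must be the top-leader $l_1$, who in Phase 4 triggers all of its entering arcs at once, again giving $value^+_{K^*,v}=value^+_{D,v}$ and $value^-_{v,K^*}=0$, whence (2) holds. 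Consequently $K^*$ is a connected subgraph of $D$ witnessing both (1) and (2), contradicting the standing assumption on $D$; therefore no coalition can strictly profit by deviating, and $P$ is strong Nash equilibrium.

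The delicate points to get right in this argument are: that a conforming party incident to any triggered arc in fact recovers its \emph{entire} entering value (so that $value^+$ is preserved exactly, not merely bounded below); that a deviator cannot trigger a conforming party's leaving arcs in a harmful way, because a conforming party reveals its secret or signature only once all of its own entering arcs are secured (this is precisely the content behind Lemmas \ref{lem1}--\ref{lem4}); and that the component decomposition genuinely isolates a strictly-gaining connected witness, with isolated parties relegated to trivial components that cannot restore the coalition's lost profit.
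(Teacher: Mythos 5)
Your proof is correct and follows essentially the same route as the paper's: assume a strictly profitable deviation by a coalition, pass to the subgraph of triggered arcs, use Lemma \ref{lem4} to show every conforming party in that subgraph receives its full entering value (so inequality (2) of Section \ref{swap model} holds), and contradict the standing assumption on $D$. In fact you are somewhat more careful than the paper on two points where its one-paragraph proof is terse: you extract a weakly connected, strictly-gaining component before invoking the assumption (which is stated only for \emph{connected} subgraphs, whereas the paper applies it directly to the possibly disconnected triggered subgraph), and you explicitly dispose of the case of a conforming party whose entering arcs are triggered but whose leaving arcs are not (possible only for the top-leader $l_1$, who triggers all entering arcs at once in Phase 4) --- a case not literally covered by the paper's appeal to Lemma \ref{lem4}, which only has the implication from a triggered \emph{leaving} arc.
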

\begin{proof}
	Assume by contradiction that, in an execution of $P$, a coalition $C$ formed from some parties gets more profits by deviating from protocol $P$ when the other parties follow $P$.
	That is, for $C$, the profits gained by $D'$ is bigger than that $C$ gets with $D$, where $D'=(V',A')$ is the subgraph of $D$ induced by all arcs triggered in the execution.
	In other words, the following inequality holds: \\
	\begin{center}
		$\sum_{v \in C} (value^+_{D',v} - value^-_{v,D'})  > \sum_{v \in C} (value^+_{D,v} - value^-_{v,D})$
	\end{center}
	
	Every party in $V' \setminus C$ triggers all his entering arcs by Lemma \ref{lem4}.
	Therefore, all entering arcs to $v'$ in $V' \setminus C$ are included in $A'$.
	Therefore, the following inequality holds for every party $v' \in V' \setminus C$.\\
	
	\begin{center}
		$\forall v' \in V' \setminus C: value^+_{D',v'} - value^-_{v',D'} =value^+_{D,v'} - value^-_{v',D'} \ge  value^+_{D,v'} - value^-_{v',D}$
	\end{center}
	However, these condition contradicts the assumption on a swap $D=(V,A)$ introduced in Section \ref{swap model}.\par
\end{proof}

\begin{thm}
	Protocol $P$ is an atomic cross-chain swaps protocol.
	If every party follows $P$, the swap can be completed in time $2(diam(D)+1)\Delta + 2\varepsilon$.
	Even if any set of parties deviates from $P$, the swap finishes in at most time $(diam(D)+n+1)\Delta + 2\varepsilon$. 
	Protocol $P$ requires space complexity (the total number of bits on all the blockchains) of $O(|A|\cdots|V|)$.
	Protocol $P$ requires local time complexity (the maximum execution time of a contract in a swap to transfer the corresponding asset) of $O(|V|)$.
\end{thm}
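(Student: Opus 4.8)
The plan is to assemble the theorem from the lemmas already proved and then supply two short counting arguments for the complexity claims, since no new invariant is needed beyond what Lemmas~\ref{lem1}--\ref{lem8} give.

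First, \emph{atomicity} is immediate: by the definition of an atomic swap protocol it suffices that $P$ is both uniform and strong Nash equilibrium, and these are exactly Lemma~\ref{lem7} and Lemma~\ref{lem8}. For the \emph{timing bounds}, when every party follows $P$, Lemma~\ref{lem6} already states that all arcs are triggered by time $t_s + 2(diam(D)+1)\Delta + 2\varepsilon$, i.e.\ within elapsed time $2(diam(D)+1)\Delta + 2\varepsilon$ from the start. For the deviation case, I would invoke Lemma~\ref{lem5}: each arc is the leaving arc of exactly one party, and every contract carries the same hardcoded timeout deadline, so no contract can remain undecided past that deadline; applying Lemma~\ref{lem5} to all leaving arcs therefore yields the stated finishing time regardless of which parties deviate.

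Second, the \emph{space complexity}. Here I would argue directly from the contract code. The persistent data stored in one contract consists of the $k$ hashlocks $h_1,\dots,h_k$, the $n$ public keys $pkey_1,\dots,pkey_n$, the two party addresses, and the start time; the secrets and signatures supplied to \texttt{transfer} are transient call arguments and are not stored. Treating each hashlock, key, and address as $O(1)$ bits (a constant under a fixed security parameter), the storage of a single contract is $O(k+n)=O(|V|)$, using $k=|L|\le|V|$. Crucially, the topology of $D$ is deliberately \emph{not} stored, so no additive $O(|A|)$ term per contract arises, which is where the improvement over $O(|A|^2)$ comes from. Exactly one contract is published per arc, giving $|A|$ contracts in total and hence overall storage $O(|A|\cdot|V|)$.

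Third, the \emph{local time complexity}. Triggering a contract runs \texttt{transfer} once, which checks the $k$ hash equalities $H(s_i)=h_i$ and verifies at most $x\le n$ signatures on the $k$-tuple $(s_1,\dots,s_k)$, for $O(k+n)=O(|V|)$ work; notice $|L|$ never appears as a multiplicative factor, unlike Herlihy's $O(|V|\cdot|L|)$. The main obstacle I anticipate is not any single step but pinning down the complexity accounting cleanly: one must state explicitly that hashlocks, keys, and signatures each occupy $O(1)$ bits, that trigger-time arguments do not count toward persistent space, and that the absence of stored topology is what removes the quadratic term. Once these conventions are fixed, every clause of the theorem reduces either to a lemma citation or to a one-line count.
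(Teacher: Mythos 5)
Your proposal matches the paper's own proof essentially step for step: atomicity from Lemmas~\ref{lem7} and~\ref{lem8}, the two timing bounds from Lemmas~\ref{lem6} and~\ref{lem5}, and the same per-contract counting ($|A|$ contracts, each storing $|V|$ public keys and $|L|\le|V|$ hashlocks) for space, with verification of $|L|$ hash preimages plus at most $|V|$ signatures for local time---your version is merely more explicit about the accounting conventions. The only caveat, which you share with the paper rather than introduce, is that Lemma~\ref{lem5} actually bounds the deviation case by $t_s+(diam(D)+n+2)\Delta+2\varepsilon$ (one extra $\Delta$ for calling the timeout), so the theorem's stated $(diam(D)+n+1)\Delta+2\varepsilon$ is off by one $\Delta$ under either write-up.
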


\begin{proof}
	Protocol $P$ is atomic by Lemmas \ref{lem7} and \ref{lem8}.
	The execution time of $P$ follows from Lemmas \ref{lem5} and \ref{lem6}.
	Space complexity of $P$ is $O(|A|\cdot|V|)$ because each arc has one contract and each contract requires public keys of $|V|$ parties and hashlocks of $|L| \le |V|$ leaders.
	Local time complexity of $P$ is $O(|V|)$ because we only need to verify $|L|(\le |V|)$ secrets and at most $|V|$ signatures.
\end{proof}

\section{Discussion}\label{discussion}
We made two assumptions, in Section 2.3, on the value function specifying the values of assets to parties.
Remind that, we assume that for any subgraph $D'$ of $D$, no coalition $C$ in $D'$ gets more benefits in $D'$ than $D$ when every other party in $D'$ gets benefits in $D'$ no less than in $D$.\par
The first alternative is to replace coalition $C$ in the above assumption with just a party $v$.
That is, we assume that for any subgraph $D'$ of $D$, no party $v$ in $D'$ gets more benefits in $D'$ than $D$ when every other party in $D'$ gets benefits in $D'$ no less than in $D$.
Under this assumption, the proposed protocol $P$ is not strong Nash equilibrium but is Nash equilibrium, which can be proved in the same way as Lemma 4.8.\par

The second alternative is the assumption that Herlihy made for his protocol \cite{1}.
Specifically, He classifies the parties $v \in V$ into four groups as follows,
based on the status of the arcs entering and leaving $v$ in the end of an execution of a swap algorithm.
\begin{description}
\item[DEAL]: All arcs entering and leaving $v$ are triggered.
\item[NO\_DEAL]: No arc entering or leaving $v$ is triggered.
\item[FRRE\_RIDE]: Some arc entering $v$ is triggered, but no arc leaving $v$ is triggered.
\item[DISCOUNT]: All arcs entering $v$ are triggered, but some arc leaving $v$ is not triggered.
\item[UNDER\_WATER]: Some arcs entering $v$ is triggered and some arc leaving $v$ is not triggered.
\end{description}
He assumes that no party accepts to be UNDER\_WATER, more generally, every coalition tries to avoid the case
that any member of the coalition becomes UNDER\_WATER, with the highest priority.
He also assumes that every party prefer DEAL to NO\_DEAL
and some parties or coalitions may deviate from the swap protocol because they prefer FREE\_RIDE and DISCOUNT to DEAL. 
Our model has a sufficient power to represent this assumption.
It suffices to define the benefit of $v$ in $D'$ as  $value^+_{D',v} - value^-_{v,D'} + exp(D',v)$ where the \emph{exception value} $exp(D',v)$ is defined as follows:　$exp(D',v)= -\infty$ if there exist arc $(u,v) \in A \setminus A'$ and arc $(v,w) \in A'$, otherwise $exp(D',v)=0$.
If a party (or some party of a coalition) ends up with UNDER\_WATER, then the benefit of the party (or the coalition) is $-\infty$, thus they try to avoid the situation with the highest priority.
The situation ending up with FREE\_RIDE or DISCOUNT brings a party or a coalition a larger benefit than DEAL. 

In what follows, we show that our protocol $P$ is strong Nash equilibrium under the second assumption,
equivalent to the Herlihy's assumption\cite{1}.
Assume by contradiction that, in an execution of $P$, some coalition $C$ gets more benefits by deviating from protocol $P$ than by conforming to $P$. 
Let $D'=(V',A')$ be the subgraph of $D$ induced by the set of the arcs triggered in the execution.
Since we introduce the exception value, every party in $C$ must trigger all his entering arcs in $D$ if one of his leaving arc in $D$ is triggered.
Since $D$ is strongly connected, if there exists an arc $(u,v) \in A'$, there exists an arc $(v,w) \in A'$, thus all entering arcs to $v$ are included in $A'$.
Therefore, every entering arc of every party in $V'$ is included in $A'$, which implies $D=D'$.
This contradicts the assumption that $C$ gets more benefits in $D'$ than in $D$.

\section{Conclusions}
In this paper, we proposed an atomic cross-chain protocol to improve space complexity and local time complexity. 
Herlihy's protocol \cite{1} requires to store the swap topology in each contract and set timelocks to each secret by using the topology.
The proposed protocol need not store the swap topology in any contract.
Instead, we set the time condition to trigger a contract depending on the number of signatures sent to the contract.
Therefore, every party $v$ can immediately trigger the contracts of all entering arcs to $v$, once a contract on some leaving arc from $v$ is triggered.
This is because the signature of $v$ is not included in the signatures used to trigger the leaving contract.

%
%
%


\begin{thebibliography}{99}

\bibitem{1}M. Herlihy.  Atomic cross-chain swaps. Proceedings of the 2018 ACM Symposium on Principles of Distributed Computing. ACM, 2018..

\bibitem{2}S. Nakamoto. Bitcoin: A peer-to-peer electronic cash system. 2008.

\bibitem{3}S. Underwood. Blockchain beyond bitcoin. Communications of the ACM 59.11 (2016): 15-17.

\bibitem{4}N. Szabo. The idea of smart contracts.http://szabo.best.vwh.net/smart contracts idea.html, 1997.

\bibitem{5}L. Luu, D. H. Chu, H. Olickel, P. Saxena, A. Hobor. Making smart contracts smarter. Proceedings of the 2016 ACM SIGSAC Conference on Computer and Communications Security. ACM, 2016., October 24 - 28, 2016.

\bibitem{6}S. Bowe and D. Hopwood. Hashed time-locked contract transactions.\\ https://github.com/bitcoin/bips/blob/master/bip-0199.mediawiki. As of 9 January 2018.

\bibitem{7}BitCoin Wiki. Hashed timelock contracts Timelock Contracts, October 2018.

\bibitem{8}R. M. Karp. Reducibility among combinatorial problems. In Proceedings of a symposium on the Complexity of Computer Computations, held March 20-22, 1972, at the IBM omas J. Watson Research Center, Yorktown Heights, New York.,pages 85-103, 1972.

\bibitem{9}A. Becker and D. Geiger. Optimization of pearl’s method of conditioning and greedy-like approximation algorithms for the vertex feedback set problem. Arti-cial Intelligence, 83(1):167-188, 1996.

\bibitem{10}D. J. Abraham, A. Blum, and T. Sandholm. Clearing algorithms for barter exchange markets: Enabling nationwide kidney exchanges. In Proceedings of the 8th ACM Conference on Electronic Commerce, EC ’07, pages 295-304, New York, NY, USA, 2007. ACM.

\bibitem{11}J. P. Dickerson, D. F. Manlove, B. Plaut, T. Sandholm, and J. Trimble. Positionindexed formulations for kidney exchange. CoRR, abs/1606.01623, 2016.

\bibitem{12}Z. Jia, P. Tang, R. Wang, and H. Zhang. Ecient near-optimal algorithms for barter exchange. In Proceedings of the 16th Conference on Autonomous Agents and MultiAgent Systems, AAMAS ’17, pages 362-370, Richland, SC, 2017. International Foundation for Autonomous Agents and Multiagent Systems.

\bibitem{13}L. Shapley and H. Scarf. On cores and indivisibility. Journal of Mathematical Economics, 1(1):23-37, 1974.

\bibitem{14}R. M. Kaplan. An improved algorithm for multi-way trading for exchange and barter. Electronic Commerce Research and Applications, 10(1):67-74, 2011. Special Section: Service Innovation in E-Commerce.

\bibitem{15}M. Herlihy, B. Liskov and L. Shrira. Cross-chain Deals and Adversarial Commerce. CoRR, abs/1905.09743, 2019.

\bibitem{16}M. Borkowski ,M. Sigwart, P. Frauenthaler, T. Hukkinen, S. Schulte. Deterministic Cross-Blockchain Token Transfers. CoRR, abs/1905.06204, 2019.

\bibitem{17}A. F. Anta, C. Georgiou, N. Nicolaou. Atomic Appends: Selling Cars and Coordinating Armies with Multiple Distributed Ledgers. CoRR, abs/1812.08446, 2018.

\end{thebibliography}
\end{document}